\let\epsilon\varepsilon
\let\phi\varphi
\newtheorem{remark}{Remark}[section]
\newtheorem{example}[theorem]{Example}
\newcommand{\EE}{\mathbb{E}}
\newcommand{\R}{\mathbb{R}}
\renewcommand{\Re}[1]{\mathrm{Re}(#1)}
\newcommand{\cum}{{\textstyle \varint}}
\newcommand{\bigo}{\mathcal{O}}
\newcommand{\wro}[1]{w_{#1}}
\newcommand{\cof}[2]{d_{#1,#2}}
\newcommand{\ssum}[1]{a_{#1}}
\newcommand{\srec}[1]{\tilde{a}_{#1}}
\newcommand{\coeff}[2]{\alpha_{#1,#2}}
\providecommand{\abs}[1]{\lvert#1\rvert}
\title{Exact and asymptotic results for  insurance risk models with surplus-dependent premiums}
\author{Hansj\"org Albrecher\thanks{Department of Actuarial Science, Faculty of Business and Economics, University of
  Lausanne, Extranef Building, CH-1015 Lausanne, Switzerland, and Swiss Finance Institute ({\tt hansjoerg.albrecher@unil.ch}). Supported by the Swiss National Science Foundation Project~200021-124635/1.} \and
  Corina Constantinescu\thanks{Department of Actuarial Science, Faculty of Business and Economics, University of
  Lausanne, Extranef Building, CH-1015 Lausanne, Switzerland ({\tt corina.constantinescu@unil.ch}). Supported by the Swiss National Science Foundation Project~200021-124635/1.} \and
  Zbigniew Palmowski\thanks{Mathematical Institute, University of Wroclaw, Pl.~Grunwaldzki 2/4, 50-384 Wroclaw, Poland
 ({\tt zbigniew.palmowski@gmail.com})}\and
  Georg Regensburger\thanks{INRIA Saclay -- \^{I}le de France, Project DISCO, L2S, Sup\'{e}lec,
   91192 Gif-sur-Yvette Cedex, France ({\tt georg.regensburger@ricam.oeaw.ac.at}).
  Supported by the Austrian Science Fund (FWF): J3030-N18.} \and
  Markus Rosenkranz\thanks{School of Mathematics, Statistics \& Actuarial Science, University of Kent,
  Cornwallis Building, Canterbury, Kent CT2 7NF, United Kingdom ({\tt M.Rosenkranz@kent.ac.uk})}.}
\begin{document}

\maketitle

\begin{abstract}
In this paper we develop a symbolic technique to obtain asymptotic expressions for ruin probabilities and discounted penalty functions in renewal insurance risk models when the premium income depends on the present surplus of the insurance portfolio. The analysis is based on boundary problems for linear ordinary differential equations with variable coefficients. The algebraic structure of the Green's operators allows us to develop an intuitive way of tackling the
asymptotic behavior of the solutions, leading to exponential-type expansions  and Cram\'er-type asymptotics. Furthermore, we obtain closed-form solutions for more specific cases of premium functions in the compound Poisson risk model.
\end{abstract}

\begin{keywords}
renewal risk models, surplus dependent premiums, boundary value problems, Green's operators, asymptotic expansions,
\end{keywords}

\begin{AMS}
91B30, 34B27, 34B05
\end{AMS}

\pagestyle{myheadings}
\thispagestyle{plain}
\markboth{ALBRECHER, CONSTANTINESCU, PALMOWSKI, REGENSBURGER, ROSENKRANZ}{ON SURPLUS-DEPENDENT PREMIUM RISK MODELS}

\section{Introduction}
\label{sec:intro}

The study of level crossing events is a classical topic of risk theory and has turned out to be a fruitful area of applied mathematics, as (depending on the model assumptions) often subtle applications of tools from real and complex analysis, functional analysis, asymptotic analysis and also algebra are needed (see e.g. \cite{AsmAlb2010} for a recent survey).

In classical insurance risk theory, the collective renewal risk model describes
the amount of surplus $U(t)$ of an insurance portfolio at
time $t$ by \begin{equation}\label{eq1}U(t) = u +c\,t -\sum_{k=1}^{N(t)} X_k,\end{equation}
where $c$ represents a constant rate of premium inflow, $N(t)$ is a renewal process
that counts the number of claims incurred during the time interval
$(0,t]$ and $(X_k)_{k\geq 0}$ is a sequence of independent and identically distributed (i.i.d.) claim sizes with distribution function $F_X$ and density $f_X$ (also independent of the claim arrival process $N(t)$). Let $(\tau_k)_{k\geq 0}$ be the i.i.d. sequence of interclaim times. One of the crucial quantities to investigate in this context is the probability that at some point in time the surplus in the portfolio will not be sufficient to cover the claims, which is called the probability of ruin
\[
\psi(u) = P(T_u<\infty \mid U(0)=u),
\]
where $U(0)=u\geq 0$ is the initial capital in the portfolio and $$T_u=\inf \:\{t\geq 0:\,U(t)<0 \mid U(0)=u\}.$$
A related, more general quantity is the expected
discounted penalty function, which penalizes the ruin event for both
the deficit at ruin and the surplus before ruin,
\begin{align*}
  \Phi(u) &= \mathbb{E} \left(e^{-\delta T_u} \, w(U(T_u-),
    \abs{U(T_u)}) \, 1_{T_u <\infty}\,\vert \,U(0)=u\right),
\end{align*}
where $\delta\ge 0$ is a discount rate and the penalty $w(x,y)$ is a bivariate
function. ($\Phi(u)$ is often referred to as the Gerber-Shiu function, see \cite{GerberShiu1998}).

The classical collective risk model is based on the assumption of a constant premium rate
$c$. However, it is clear that it will often be more realistic to let premium amounts depend on the current
surplus level. In this case, the risk process (\ref{eq1}) is replaced by
$$U(t)=u+\int_0^t p(U(s))\;ds -\sum_{k=1}^{N(t)} X_k.$$
Hence, in between jumps (claims) the risk process moves
deterministically along the curve $\phi(u,t)$, which satisfies the
partial differential equation
\[
 \frac{\partial \phi}{\partial t} = p(u)\frac{\partial \phi}{\partial u};\quad \phi(u,0)=u.
\]
There are only a few situations for which exact expressions for $\psi(u)$ are known for surplus-dependent premiums. One such case is the Cram\'er-Lundberg risk model (where $N(t)$ is a homogeneous Poisson process with intensity $\lambda$) and the linear premium function $p(u)=c+\epsilon u$, which has the interpretation of an interest rate $\epsilon$ on the available surplus. In the case of exponential
claims, it was already shown by \cite{Seg1942} that the probability of ruin then has the
form
\begin{equation}\label{seg}
\psi(u) = \frac{\lambda \epsilon^{\lambda/\epsilon-1}}
{\mu^{\lambda/\epsilon}c^{\lambda/\epsilon}\exp(-\mu
  c/\epsilon)+\lambda \epsilon^{\lambda/\epsilon-1}\Gamma(\tfrac{\mu
    c}{\epsilon}, \tfrac{\lambda}{\epsilon})} \,
\Gamma\big(\tfrac{\mu(c+\epsilon
  u)}{\epsilon},\tfrac{\lambda}{\epsilon}\big),
\end{equation}
where $\Gamma(\eta, x) = \int_x^{\infty} t^{\eta -1}e^{-t}dt$ is the
incomplete gamma function (for extensions to finite-time ruin probabilities, see \cite{KnPet1,KnPet2} and \cite{AlTT}). In fact, for the Cram\'er-Lundberg risk model with exponential claims and general monotone
premium function $p(u)$, one has the explicit expression
\begin{equation} \label{dwa}
 \psi(u) = \int_u^\infty \frac{\gamma_0\lambda}{p(x)} \, \exp{\{\lambda q(x)-\mu x\}}\;dx,
\end{equation}
where $1/\gamma_0 \equiv 1+ \lambda \int_0^\infty p(x)^{-1} \, \exp
{\{\lambda q(x)-\mu x\}}\;dx$ and $q(x) \equiv \int_0^x
\frac{1}{p(y)}\,dy$ is assumed finite for $x>0$
(see \cite{Tichy84}).
Since for surplus-dependent premiums the probabilistic approach based on random equations does not work, and also the usual analytic methods lead to difficulties because the equations become too complex, it is a challenge to derive explicit solutions beyond the one given above. \\
In this paper we will employ a method based on boundary problems and Green's operators to derive closed-form solutions and asymptotic properties of $\psi(u)$ and $\Phi(u)$
under more general model assumptions.

%
%

%
%

For that purpose we will
employ the algebraic operator approach developed
in~\cite{Albetal2010}. However, since that approach was restricted to linear
ordinary differential equations (LODEs) with constant coefficients, we
will have to extend the theory to tackle the variable-coefficients equations that occur in the present context.

In Section~\ref{sec:boundary-problem} we derive the boundary problem for
the Gerber-Shiu function $\Phi(u)$ in a renewal risk model with claim and interclaim distributions having rational Laplace transform. For solving it, we employ a new
symbolic method, described in Section~\ref{sec:GOA}. This allows
to construct integral representations for the solution of
inhomogeneous LODEs with variable coefficients, for given initial values, under a stability condition. In
Section~\ref{sec:expansion} we derive a general asymptotic expansion for the discounted penalty function in the renewal model framework.
Subsequently, Section~\ref{sec:e1e1} is dedicated to the more specific case of compound Poisson risk models with exponential
claims, for which we have second-order LODEs. More specifically, in \ref{sec:exactgeneric} we derive exact solutions for a generic premium function $p(u)$. Further, in \ref{sec:exactpart}, we consider some interesting particular
cases of $p(u)$. In \ref{sec:asygeneric} we identify the necessary conditions a premium function should satisfy such that the asymptotic analysis is possible and the assumptions necessary for the asymptotic results in Section~\ref{sec:expansion} are validated. We will end by  giving concrete examples of such premium functions and their asymptotics.

Throughout the paper we will assume that $U(t)\to\infty$ a.s. This
assumption is satisfied for example when $p(u)>\EE X/\EE \tau +
\varsigma$ for some $\varsigma >0$ and sufficiently large $u$;
see e.g.~\cite{AsmAlb2010}.

\section{Deriving the boundary problem}
\label{sec:boundary-problem}

Assume that the distribution of the interclaim time of the renewal process $N(t)$ has rational Laplace transform. For simplicity of notation, we assume further that the rational Laplace transform has a constant numerator. Then its
density $f_{\tau}$ satisfies a LODE with constant coefficients
\begin{equation}
  \label{eq:tau}\mathcal{L}_{\tau}(\frac{d}{dt})f_{\tau}(t)=0\end{equation} and homogeneous
initial conditions $\smash{f_{\tau}^{(k)}(t)=0} \; (k=0,\ldots,
n-2)$, where
\begin{align*}
  &\mathcal{L}_{\tau}(x) =x^n +
  \alpha_{n-1}x^{n-1} +\cdots+\alpha_0.
\end{align*}
Using the method of~\cite{ConTho2011}, we can then derive an integro-differential equation for $\Phi(u)$
\begin{equation}
  \label{eq:ide}
  \mathcal{L}_{\tau}^*\left(p(u)\frac{d}{du}-\delta\right)
  \Phi(u) =\alpha_0 \left(\int_0^u \Phi(u-y) \, dF_X(y) + \omega(u)\right),
\end{equation}
where $\mathcal{L}_\tau^*$ is the adjoint operator of
$\mathcal{L}_\tau$ defined through
\[\mathcal{L}^*_{\tau}(x) =\mathcal{L}_{\tau}(-x) =(-x)^n +
  \alpha_{n-1}(-x)^{n-1} +\cdots+\alpha_0.\]

Assume now that the claim size distribution also has a rational Laplace
transform, so that its density $f_X$ satisfies another such LODE
\begin{equation}
  \label{eq:x}\mathcal{L}_{X}(\frac{d}{dy})f_{X}(y)=0\end{equation} with initial conditions
$\smash{f_{X}^{(k)}(x)=0} \: (k=0,\ldots, m-2)$, where
\begin{align*}
  &\mathcal{L}_{X}(x) =x^m + \beta_{m-1}x^{m-1} +\cdots+\beta_0.
\end{align*}
Then the
integro-differential equation \eqref{eq:ide} becomes a LODE with variable
coefficients of order $m+n$, namely
\begin{equation}
  \label{eq:ode}
  T \Phi(u) = g(u)
\end{equation}
with differential operator
\begin{equation}\label{opT}
 T= \mathcal{L}_{X}\left(\frac{d}{du}\right)\mathcal{L}_{\tau}^*\left(p(u)\frac{d}{du}-\delta\right) -\alpha_0 \beta_0
\end{equation}
 and right-hand side $$g(u)=\alpha_0 \,
  \mathcal{L}_{X}(\frac{d}{du}) \, \omega(u),$$ where $\omega(u) \equiv \int_u^{\infty}w(u, y-u) \, f_X(y) \, dy.$
For $\delta=0$ and $w=1$, Equation~\eqref{eq:ode}
reduces to the well-known equation for the probability of ruin.

The equations hold for sufficiently regular functions $p$. In the
special case $p(u)\equiv c$ one recovers the LODE with constant coefficients
whose characteristic polynomial is of degree $n+m$ and corresponds to
Lundberg's equation. It is known that, for $\delta>0$, this polynomial has $m$
solutions $\sigma_i$, with negative real part,  and $n$ solutions $\rho_i$, with positive real part; see for
example~\cite{LiGar2005}
and~\cite{LanWil2008}. In~\cite{Albetal2010}, we have derived
\begin{equation}
  \label{eq:ansatz}
  \Phi(u) = \gamma_1 e^{\sigma_1 u} +\cdots + \gamma_m  e^{\sigma_m u} + Gg(u),
\end{equation}
where the  $\gamma_i$ are determined by the initial conditions and
\begin{equation}
  \label{eq:oldgreen}
  Gg(u) \equiv \sum_{i=1}^{m}\sum_{j=1}^{n} c_{ij} \Big( \int_{0}^u
  e^{\sigma_i(u-\xi)} + \int_u^{\infty} e^{\rho_j(u-\xi)}
 - e^{\sigma_i u}\int_0^{\infty} e^{-\rho_j(\xi)}  \Big )\,g(\xi)\, d\xi
\end{equation}
defines the Green's operator for the inhomogeneous LODE~\eqref{eq:ode}
with homogeneous boundary conditions, where
\begin{align*}
  &c_{ij} = - \prod_{k=1,k\not=i}^m (\sigma_i - \sigma_k)^{-1}
  \prod_{k=1,k\not=j}^n (\rho_j-\rho_k )^{-1} \,(\rho_j-\sigma_i)^{-1}.
\end{align*}
The boundary conditions for \eqref{eq:ode} consist of the initial conditions
$\smash{\Phi^{(k)}(0)} \: (k = 0, \dots, m-1)$, determined from the integro-differential equation, and the stability
condition $\Phi(\infty) = 0$, provided by the model assumptions.

In analogy to the constant coefficients case, we assume the existence
of a fundamental system for equation \eqref{eq:ode} with~$m$ stable
solutions~$s_i(u)$ and $n$~unstable solutions~$r_j(u)$. Here a
solution~$f(u)$ is called stable if $f(u) \rightarrow 0$ and unstable
if $f(u) \rightarrow \infty$ as $u \rightarrow \infty$. We write $t_1,
\dots, t_{m+n}$ for the complete sequence of solutions $s_1, \ldots,
s_m, r_1, \dots, r_n$, and we assume furthermore that the successive
Wronskians $\wro{k} \equiv W[t_1, \dots, t_k]$, for $k = 1, \dots,
m+n$ are all nonzero on the half-line $\mathbb{R}^+ =[0, \infty)$. Under these
assumptions, the algebraic operator approach developed for the
constant coefficients case \cite{Albetal2010} will be extended to the
surplus-dependent premium case  in Section~\ref{sec:GOA}, and the general
solution of~\eqref{eq:ode} then has the form
\[
  \Phi(u) = \gamma_1 s_1(u)+\cdots + \gamma_m s_m(u) + Gg(u),
\]
where the $\gamma_i$ are determined by the initial values and $Gg(u)$ is
again the Green's operator for the inhomogeneous LODE~\eqref{eq:ode}
with homogeneous boundary conditions, but this time with non-constant
$p(u)$. As a consequence, the representation~\eqref{eq:oldgreen} is no
longer valid, and we will derive a new explicit expression that
generalizes it (Theorem~\ref{thm:gen-greensop}).


Let us complete this section with a remark about how to check
that the fundamental system has stable and unstable solutions. Roughly
speaking, this amounts to an asymptotic analysis of the solutions of
the homogeneous equation. According to \cite[Ch.5]{Fed1993}, one can
identify conditions on $p(u)$ that guarantee the existence of such a
fundamental system. These conditions specify the structure of the
coefficients, namely: either they converge (sufficiently fast) to
constants---in this case one speaks of \textit{almost constant
  coefficients}---or they diverge to infinity. The canonical form
of~\eqref{eq:ode} indicates of course that the former case applies for
our setting here. However, the speed of convergence of the
coefficients depends crucially on the premium function $p(u)$.
For
instance, we will show in Example~\ref{exx} that for $p(u) =
c\, e^{\epsilon/ u}$, the LODE with almost constant coefficients
converges to the LODE with constant coefficients given
in~\cite{Albetal2010}.

\section{Green's operator approach}
\label{sec:GOA}

In the previous section we have seen that the core task for computing
the Gerber-Shiu function $\Phi(u)$ is to determine the Green's operator
$G$ for the inhomogeneous LODE~\eqref{eq:ode} with homogeneous
boundary conditions consisting of the initial conditions
$\smash{\Phi^{(k)}(0) = 0} \: (k = 0, \dots, m-1)$ and the stability
condition $\Phi(\infty) = 0$. In this section we will present a symbolic
method that allows to construct $G$ for a generic LODE with variable
coefficients and homogeneous boundary conditions. In other words, we
consider boundary problems of the general type
\begin{align}
  \left\{
    \begin{aligned}
      \label{gen-bvp}
      & T \, \Phi(u) = g(u),\\
      & \Phi(0) = \Phi'(0) = \cdots =\Phi^{(m-1)}(0) = 0 \quad\text{and}\quad
      \Phi(\infty)=0,
    \end{aligned}
  \right.
\end{align}
where $T \equiv D^{m+n} + c_{m+n-1}(u) \, D^{m+n-1} + \dots + c_1(u)
\, D + c_0(u)$ is a linear differential operator with variable
coefficients (and leading coefficient normalized to unity) and $D \equiv
\tfrac{d}{du}$. Under the conditions described in
Section~\ref{sec:boundary-problem} the solution of~\eqref{gen-bvp} is
unique and depends linearly on the so-called forcing function
$g(u)$. Therefore the assignment $g \mapsto \Phi$ is a linear operator:
the Green's operator~$G$ of~\eqref{gen-bvp}.
The following fact follows immediately from the theory of differential equations.\\

\begin{theorem}\label{Th:main}
The Gerber-Shiu function equals
\begin{equation} \label{eq:genform}
\Phi(u) = \gamma_1 s_1(u)+\cdots+\gamma_m s_m(u)
 +Gg(u),
\end{equation}
where $G$ is the Green's operator for equation \eqref{gen-bvp},
and the constants $\gamma_i$ can be identified from the initial conditions.
\end{theorem}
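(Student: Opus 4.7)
The plan is to reduce the problem to the classical superposition principle of linear ODE theory, using the Green's operator $G$ (still to be constructed in Section~\ref{sec:GOA}) to supply a particular solution and the fundamental system $(s_1,\ldots,s_m,r_1,\ldots,r_n)$ to parametrize the homogeneous part. First I would set $\Phi_p := Gg$, which by the defining property of the Green's operator for \eqref{gen-bvp} satisfies $T\Phi_p = g$ together with the homogeneous boundary data $\Phi_p^{(k)}(0)=0$ for $k=0,\ldots,m-1$ and $\Phi_p(\infty)=0$. Writing $\Phi_h := \Phi - Gg$, linearity of $T$ gives $T\Phi_h = 0$, so $\Phi_h$ lies in the kernel of $T$.

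Next I would use the fundamental system to write $\Phi_h = \sum_{i=1}^m \gamma_i s_i(u) + \sum_{j=1}^n \eta_j r_j(u)$ for some constants $\gamma_i,\eta_j\in\R$. The model assumption $\Phi(\infty)=0$ together with $Gg(\infty)=0$ forces $\Phi_h(\infty)=0$; since $s_i(u)\to 0$ while $r_j(u)\to\infty$ as $u\to\infty$, linear independence of the $r_j$ in the asymptotic regime implies $\eta_1=\cdots=\eta_n=0$. This is precisely where the stability condition is used to truncate the $n$ unstable modes, leaving the $m$ free coefficients $\gamma_i$ that match the count of prescribed initial conditions at~$0$.

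Finally, the initial values $\Phi^{(k)}(0)=\phi_k$ extracted from the integro-differential equation \eqref{eq:ide} translate, since $(Gg)^{(k)}(0)=0$, into the linear system
\[
\sum_{i=1}^{m} \gamma_i\, s_i^{(k)}(0) = \phi_k, \qquad k=0,\ldots,m-1.
\]
The coefficient matrix has determinant $\wro{m}(0) = W[s_1,\ldots,s_m](0)$, which by the standing hypothesis of Section~\ref{sec:boundary-problem} is nonzero, so the $\gamma_i$ are uniquely determined. Combining $\Phi_h$ and $\Phi_p$ yields the stated representation~\eqref{eq:genform}.

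The main obstacle is not in this argument at all: the work is entirely absorbed into the hypothesis that a fundamental system split into stable/unstable solutions actually exists (discussed after \eqref{eq:ode} via the results in~\cite{Fed1993}) and into the as-yet-unproved existence of a Green's operator $G$ satisfying $T(Gg)=g$ with homogeneous boundary data. Once Theorem~\ref{thm:gen-greensop} supplies an explicit $G$, the decomposition above is a routine application of linearity and the uniqueness granted by the Wronskian assumption.
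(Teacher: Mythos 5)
Your argument is the standard superposition-plus-boundary-condition decomposition, which is exactly what the paper means when it says the result ``follows immediately from the theory of differential equations''; the paper gives no written proof, so your write-out is the natural one.

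One step deserves more care than you give it. You conclude $\eta_1=\cdots=\eta_n=0$ from $\sum_j \eta_j r_j(u)\to 0$ by appealing to ``linear independence of the $r_j$ in the asymptotic regime,'' but each $r_j\to\infty$ does not by itself preclude a nontrivial combination tending to zero (e.g.\ $r_1=e^u$, $r_2=e^u+e^{-u}$ individually blow up but $r_1-r_2\to 0$). The paper is aware of this: in Section~\ref{sec:e1e1} it explicitly notes that the truncation argument ``fails when there are more than two unstable solutions since they can cancel out unless we take some further precautions (e.g.\ requiring them to be of the same sign).'' The standing hypothesis of Section~\ref{sec:boundary-problem} (nonvanishing successive Wronskians on $\R^+$) controls behaviour at finite $u$, not at $u\to\infty$, so it does not close this gap either. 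What actually makes the step rigorous is the asymptotic separation built into conditions \eqref{C1}--\eqref{C3} of Section~\ref{sec:expansion}, which force the $r_j$ to have pairwise distinct dominant growth rates and hence forbid cancellation. Your proof would be complete if you either invoke those conditions or assume (as the paper does implicitly for $n=1$) that the unstable modes are asymptotically ordered so that the largest one cannot be killed. Everything else — $(Gg)^{(k)}(0)=0$, the linear system with coefficient determinant $\wro{m}(0)\ne 0$, and the resulting uniqueness of the $\gamma_i$ — is correct and matches the paper's setup.
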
\\

For describing our new method of constructing an explicit
representation of~$G$, let us recall how this was achieved
in~\cite{Albetal2010} for the special case of constant coefficients
$c_i(u) \equiv c_i$. We will use the same notation as there, in particular
the basic operators $A = \cum_0^u$, $B = \cum_u^\infty$ and the
definite integral $F = A + B = \cum_0^\infty$. Employing the basic
operators, the crucial idea was to factor the Green's operator as
\begin{equation}
  \label{eq:oldgreenfact}
  G =(-1)^n A_{\sigma_1} \dots A_{\sigma_m} B_{\rho_1} \dots B_{\rho_n},
\end{equation}
where the factor operators are defined by $A_\sigma \equiv e^{\sigma
  x} A e^{-\sigma x}$ and $B_\rho \equiv e^{\rho x} B e^{-\rho x}$
with $\sigma_i$ and $\rho_j$ as described before. So the strategy was to decompose the problem and tackle the stable exponents with the basic operator~$A$,
the unstable ones with~$B$.

This idea can be carried over to the general case of~\eqref{gen-bvp}. Using
the results of \cite{RosenkranzRegensburger2008}, any Green's
operator can be fully broken down to basic operators if one can factor
the differential operator~$T$ into first-order factors. Having a
fundamental system $t_1, \dots, t_{m+n} = s_1, \dots, s_m, r_1, \dots,
r_n$ with successive Wronskians $\wro{k}(u) \ne 0 \; (k = 1, \dots,
m+n)$ for $u \in \mathbb{R}^+$, such a factorization of~$T$ can always be
achieved by well-known techniques described for example in Eqn.~(18)
of~\cite{Polya1922}; see also~\cite{Ristroph1972}
and~\cite{Zettl1974}. Using this factorization, we can break down $G$
in a way similar to~\eqref{eq:oldgreenfact} except that the
$A_{\sigma_i}$ must be replaced by more complicated operators based on
$A$ and $s_i$, similarly the $B_{\rho_j}$ by suitable operators
involving $B$ and $r_j$. We assume $m, n > 0$ throughout for avoiding
degenerate cases.\\

\begin{proposition}
  \label{greensop-fact}
  The Green's operator of~\eqref{gen-bvp} is given by $G = G_s G_r$
  where $G_s = A_{s_1} \cdots A_{s_m}$ and $G_r = (-1)^n B_{r_1}
  \cdots B_{r_n}$ with
  \begin{align*}
    A_{t_i} &= A_{s_i} = \tfrac{\wro{i}}{\wro{i-1}} \, A \,
    \tfrac{\wro{i-1}}{\wro{i}} && \quad\text{for $1 \le i \le m$},\\
    B_{t_j} &= B_{r_{j-m}} = \tfrac{\wro{j}}{\wro{j-1}} \, B \,
    \tfrac{\wro{j-1}}{\wro{j}} && \quad\text{for $m+1 \le j \le
      m+n$},
  \end{align*}
  setting $\wro{0} = 1$ for convenience.
\end{proposition}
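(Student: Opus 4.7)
The strategy is to reduce the problem to the inversion of first-order LODEs via a Polya-type factorisation of $T$, and then to apply the general composition scheme for Green's operators developed in~\cite{RosenkranzRegensburger2008}. The key identity, guaranteed by the nonvanishing of the successive Wronskians on $\mathbb{R}^+$, is
\begin{equation*}
T \;=\; \Lambda_{m+n}\,\Lambda_{m+n-1}\cdots\Lambda_1,\qquad \Lambda_k \;=\; \frac{\wro{k}}{\wro{k-1}}\,D\,\frac{\wro{k-1}}{\wro{k}},
\end{equation*}
with $\wro{0}=1$. This is the Polya factorisation referenced in the paragraph preceding the proposition. Both sides are monic of order $m+n$, and one verifies by induction on $k$ that $t_k \in \Ker(\Lambda_k\cdots\Lambda_1)$, so the kernels, and hence the operators themselves, coincide.

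The next step is to invert each first-order factor $\Lambda_k$ with the appropriate homogeneous boundary datum. Rewriting $\Lambda_k f = g$ as $D\bigl(\tfrac{\wro{k-1}}{\wro{k}}\,f\bigr) = \tfrac{\wro{k-1}}{\wro{k}}\,g$, the initial condition at $0$ combined with $DA = I$ gives $f = A_{s_k}\,g$, while the decay condition at $\infty$ combined with $DB = -I$ gives $f = -B_{r_{k-m}}\,g$. The boundary conditions of~\eqref{gen-bvp} dictate the former for the stable indices $k = 1,\ldots,m$ and the latter for the unstable indices $k = m+1,\ldots,m+n$; collecting the $n$ minus signs yields the overall factor $(-1)^n$ attached to $G_r$.

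Composing these partial inverses in the reverse order of the factorisation then gives
\begin{equation*}
G \;=\; \Lambda_1^{-1}\cdots\Lambda_{m+n}^{-1} \;=\; A_{s_1}\cdots A_{s_m}\cdot(-1)^n B_{r_1}\cdots B_{r_n} \;=\; G_s\,G_r,
\end{equation*}
which by construction right-inverts $T$. To identify this expression as the Green's operator of~\eqref{gen-bvp}, it remains to verify that $Gg$ satisfies the $m$ initial conditions at $0$ and the decay condition at $\infty$; the stability hypothesis on the fundamental system rules out homogeneous solutions matching all these conditions, yielding uniqueness.

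The main technical obstacle is the verification of $\Phi^{(k)}(0) = 0$ for $k = 0,\ldots,m-1$. The outermost factor $A_{s_1} = \wro{1}\,A\,\wro{1}^{-1}$ immediately produces a function vanishing at $0$; each further composition with an $A_{s_i}$ contributes one more order of vanishing from the inner integral $A$, but the Leibniz rule applied to the smooth Wronskian-ratio prefactors generates cross-terms whose vanishing at $0$ must be traced inductively using the successive vanishings of iterated $A$-integrals at the origin. The decay $\Phi(\infty) = 0$ is more straightforward: each $B_{r_j}$ already vanishes at infinity on its output, and the subsequent conjugations by the stable functions $s_i$ inside the $A_{s_i}$ preserve decay under the standing asymptotic assumptions, completing the identification $G = G_s G_r$.
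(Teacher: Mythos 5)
Your proposal follows the same route as the paper: a P\'olya-type factorisation of $T$ into first-order factors built from the successive Wronskians, composition of the corresponding right inverses ($A$'s for the stable factors, $-B$'s for the unstable ones), and then a direct check of the boundary conditions. One thing worth noting: your factor form $\Lambda_k = \tfrac{\wro{k}}{\wro{k-1}} D\, \tfrac{\wro{k-1}}{\wro{k}}$ is the correct one, since $\Lambda_k \, A_{t_k} = \tfrac{\wro{k}}{\wro{k-1}} D A \, \tfrac{\wro{k-1}}{\wro{k}} = I$ and $\Lambda_1 t_1 = t_1 D(1) = 0$. The paper's proof has the Wronskian ratios inverted in its definition of $T_{t_i}$, which would not telescope against the stated $A_{t_i}$; your version repairs that.

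The weak point is the verification of the decay condition $\Phi(\infty)=0$. Your argument claims that each $B_{r_j}$ already yields a decaying output and that the subsequent $A_{s_i}$ ``preserve decay.'' Neither claim holds unconditionally: $B_{r_j}g = \tfrac{\wro{j}}{\wro{j-1}} B\,\tfrac{\wro{j-1}}{\wro{j}} g$ need not vanish at infinity once the unbounded prefactor $\wro{j}/\wro{j-1}$ is applied, and $A_{s_i}h = s_i' \cdots$ (an $\cum_0^u$ integral with a nonvanishing prefactor) can perfectly well converge to a nonzero constant even when $h\to 0$ (take $h\equiv 1$ and $A_\sigma$ with $\sigma<0$ in the constant-coefficient case: $A_\sigma 1 \to -1/\sigma$). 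The clean argument--the one the paper actually uses--isolates only the outermost factor: write $\Phi = A_{s_1}\tilde g = s_1 \, A\bigl(s_1^{-1}\tilde g\bigr)$ with $\tilde g = A_{s_2}\cdots A_{s_m} G_r\, g$; then $\Phi(\infty) = s_1(\infty) \cdot \cum_0^\infty s_1(u)^{-1}\tilde g(u)\,du = 0$, because $s_1(\infty)=0$ and the integral is assumed to converge. This bypasses any claim about the inner operators preserving decay. Your sketch of the initial-condition verification is also more laborious than needed: since the innermost operator among $A_{s_1},\ldots,A_{s_m}$ is an $A$, every summand of $\Phi^{(k)}$ with $k<m$ has the form $h\cdot(A\cdots g)$ for some smooth $h$, and each such summand evaluates at $0$ to $h(0)\cdot\cum_0^0\cdots = 0$; no bookkeeping of orders of vanishing is required.
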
\\

\begin{proof}
  We employ the factorization $T = T_{r_n} \cdots T_{r_1} T_{s_m}
  \cdots T_{s_1}$, with the first-order operators given by
  \begin{align*}
    T_{t_i} &= \tfrac{\wro{i-1}}{\wro{i}} \, D \,
    \tfrac{\wro{i}}{\wro{i-1}} && \quad\text{for $1 \le i \le m$},\\
    T_{t_j} &= T_{r_{j-m}} = \tfrac{\wro{j-1}}{\wro{j}} \, D \,
    \tfrac{\wro{j}}{\wro{j-1}} && \quad\text{for $m+1 \le j \le
      m+n$}.
  \end{align*}
  It is then clear that $G = A_{s_1} \cdots A_{s_m} (-B_{r_1}) \cdots
  (-B_{r_n})$ is a right inverse of $T$ since both $A$ and $-B$ are
  right inverses of $D$. It remains to show that $\Phi = Gg$ satisfies
  the boundary conditions. Differentiating~$\Phi$ fewer than~$m$ times
  results in an expression whose summands all have the form $h \cdot
  (A \cdots g)$ for some functions $h$; evaluating any such summand
  yields $h(0) \cdot (\cum_0^0 \cdots g) = 0$, so the homogeneous
  initial conditions are indeed satisfied. For showing that the
  stability condition $\Phi(\infty) = 0$ is also fulfilled we write $\Phi =
  A_{s_1} \tilde{g}$ with $\tilde{g} \equiv A_{s_2} \cdots A_{s_m} G_r
  g$. Then $\Phi = s_1 A s_1^{-1} \tilde{g}$ and hence
  \begin{equation*}
   \Phi(\infty) = s_1(\infty) \, \cum_0^\infty s_1(u)^{-1} \tilde{g}(u) \,
    du = 0
  \end{equation*}
  because $s_1(\infty) = 0$ and the integral is assumed to
  converge.
\end{proof} \\

Note that we assume, in the above proof and henceforth, that all
forcing functions are chosen so that all infinite integrals have a
finite value (this will be the case in all the examples treated
here). This is also the reason why the $r_j$ are incorporated in $B$
operators rather than in $A$ operators as for the $s_i$. Since we want
to focus on the symbolic aspects here, we shall not elaborate these
points further.

Spelled out in detail, we can now write the Green's operator
of~\eqref{gen-bvp} in the factored form
\begin{align}
  \label{eq:goi-new}
  G &= \tfrac{\wro{1}}{\wro{0}} \, C_1 \, \tfrac{\wro{0} \,
    \wro{2}}{\wro{1}^2} \, C_2 \, \tfrac{\wro{1} \,
    \wro{3}}{\wro{2}^2} \, C_3 \, \cdots \, C_{m+n-1} \, \tfrac{\wro{m+n-2}
    \, \wro{n}}{\wro{m+n-1}^2} \, C_{m+n} \,
  \tfrac{\wro{m+n-1}}{\wro{m+n}},
\end{align}
where $C_i$ is $A$ for $1 \le i \le m$ and $-B$ for $m+1 \le i \le
m+n$. Although this brings us already some way towards a closed form
for $\Phi(u)$, we would like to collapse the $m+n$ integrals
of~\eqref{eq:goi-new} into a single integration, just as we did
in~\cite{Albetal2010}.

To start with, assume for a moment that we did not have any unstable
solutions so that the fundamental system is only $s_1, \dots, s_m$. In
that case we must dispense with the stability condition, imposing only
the homogeneous initial conditions in~\eqref{gen-bvp}. The Green's
operator consists only of $A$~operators, without any occurrence
of~$B$. In this simplified case, how can one collapse the $m$ integral
operators $C_1, \dots, C_m = A$ in~\eqref{eq:goi-new} by a linear
combination of single integrators (multiplication operators combined
with a single $A$)? The answer is given by the usual
variation-of-constants formula, which can be rewritten in our operator
notation as follows~\cite{RegensburgerRosenkranz2009a}.\\

\begin{proposition}
  \label{th:di}
  If $s_1, \dots, s_m$ is a fundamental system for the homogeneous
  equation $T \Phi=0$, the Green's operator of~\eqref{gen-bvp} is given
  by
 \begin{equation}
   \label{eq:goi-old}
   G_s = s_1 A \, \tfrac{\cof{m}{1}}{\wro{m}} +\cdots + s_m A \,
   \tfrac{\cof{m}{m}}{\wro{m}},
 \end{equation}
 where $\wro{m}$ is the Wronskian determinant of $s_1, \cdots, s_m$
 and $\cof{m}{i}$ results from $\wro{m}$ by replacing the $i$-th
 column by the $m$-th unit vector.
\end{proposition}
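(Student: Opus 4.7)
The plan is to derive this formula by running the classical Lagrange variation-of-parameters construction and reinterpreting the outcome in the operator calculus of this paper. I would start with the ansatz $\Phi(u) = \sum_{i=1}^m c_i(u)\, s_i(u)$ for unknown coefficient functions $c_i$, and impose the usual Lagrange auxiliary conditions
\[
\sum_{i=1}^m c_i'(u)\, s_i^{(k)}(u) = 0 \quad (k = 0, \ldots, m-2).
\]
These conditions are engineered precisely so that $\Phi^{(k)}(u) = \sum_i c_i(u)\, s_i^{(k)}(u)$ for every $k < m$; combined with the fact that the $s_i$ lie in $\ker T$ and that $T$ is monic of order $m$, the single remaining equation $T\Phi = g$ reduces to $\sum_{i=1}^m c_i'(u)\, s_i^{(m-1)}(u) = g(u)$.

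Next, I would solve the resulting $m \times m$ linear system for the vector $(c_1', \ldots, c_m')^\top$. Its coefficient matrix is the Wronskian matrix of $s_1, \ldots, s_m$, whose determinant $\wro{m}(u)$ is nonzero on $\mathbb{R}^+$ by assumption, while the right-hand side is $g(u)$ times the $m$-th standard unit vector. Applying Cramer's rule immediately yields $c_i'(u) = \cof{m}{i}(u)\, g(u) / \wro{m}(u)$, where $\cof{m}{i}$ is exactly the determinant obtained from $\wro{m}$ by replacing its $i$-th column with the $m$-th unit vector, as in the statement. Setting $c_i = A\bigl(\cof{m}{i}\, g / \wro{m}\bigr)$ (i.e., integrating from zero) and substituting back into the ansatz produces the claimed factored form
\[
G_s g = s_1\, A\, \tfrac{\cof{m}{1}}{\wro{m}}\, g + \cdots + s_m\, A\, \tfrac{\cof{m}{m}}{\wro{m}}\, g.
\]

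Finally, I would confirm that this $G_s$ is indeed the Green's operator, not just a particular right inverse of $T$. Since each $c_i(u) = \int_0^u \cof{m}{i}(\xi)\, g(\xi) / \wro{m}(\xi)\, d\xi$ vanishes at $u=0$, and since the Lagrange conditions give $\Phi^{(k)}(u) = \sum_i c_i(u)\, s_i^{(k)}(u)$ for $k = 0, \ldots, m-1$, all the homogeneous initial conditions $\Phi^{(k)}(0) = 0$ are automatically satisfied; the stability condition in \eqref{gen-bvp} plays no role here by hypothesis. I do not anticipate any serious obstacle. The only detail meriting care is the identification of the Cramer numerator with the column-replacement determinant $\cof{m}{i}$, which is routine once one notices that the inhomogeneity contributes only through the last row of the linear system.
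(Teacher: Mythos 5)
Your proof is correct and follows essentially the same route the paper has in mind: Proposition~\ref{th:di} is explicitly advertised as ``the usual variation-of-constants formula'' (with the details deferred to a citation), and your Lagrange ansatz, Cramer's rule computation identifying the numerator with the column-replacement determinant $\cof{m}{i}$, and observation that $c_i(0)=0$ forces the homogeneous initial conditions are exactly the standard argument. The one point worth being explicit about is that in this proposition $T$ has order $m$ (the $n=0$ case, with the stability condition dropped), which you implicitly and correctly assume.
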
\\

In other words, $\Phi = Gg$ is a particular solution of $T\Phi = g$, made
unique by imposing the initial conditions $\Phi(0) = \Phi'(0) = \cdots =
\Phi^{(m-1)}(0) = 0$. In our case, the stability condition $\Phi(\infty) =
0$ follows because $s_i(\infty) = 0$ for all $i=1, \dots, m$. But note
that~\eqref{eq:goi-old} is valid for any fundamental system $s_1,
\dots, s_m$ of $T$, yielding a particular solution for the initial
value problem (meaning~\eqref{gen-bvp} without the stability
condition).

Let us now turn to the general case, where the
fundamental system $t_1, \dots, t_{m+n}$ consists of $m\ge1$ stable
solutions $s_1, \dots, s_m$ and $n\ge1$ unstable solutions $r_1,
\dots, r_n$. In that case the Green's operator has a representation
analogous to~\eqref{eq:goi-old} except that we need $B$~operators in
addition to $A$~operators \emph{and} we have to include definite
integrals~$F$ for ``balancing'' the $B$ against the $A$~operators.\\

\begin{theorem}
  \label{thm:gen-greensop}
  Define the constants \begin{equation}
    \label{allpha}\coeff{i}{j} = \cof{i}{m+j}(0)/\wro{m+j-1}(0)\end{equation}
  for $j = 1, \ldots, n$ and $i = 1, \ldots, m+n$; the functions
  $\ssum{j} = \coeff{1}{j} \, s_1 + \cdots + \coeff{m}{j} \, s_m$ for
  $j = 1, \ldots, n$; and the functions $\srec{1}, \ldots, \srec{n}$
  by the recursion $\srec{1} = \ssum{1}$, $\srec{j} = \ssum{j} -
  \coeff{m+1}{j} \, \srec{1} - \cdots - \coeff{m+j-1}{j} \,
  \srec{j-1}$. Then the Green's operator of~\eqref{gen-bvp} is given
  by
  \begin{equation}
    \label{eq:gen-greensop}
    G = \sum_{i=1}^{m+n} t_i \, C_i \, \tfrac{\cof{i}{m+n}}{\wro{m+n}} -
    \sum_{j=1}^n \srec{j} \, F \, \tfrac{\cof{m+j}{m+n}}{\wro{m+n}},
  \end{equation}
  where $C_i$ is $A$ for $1 \le i \le m$ and $-B$ for $m+1 \le i \le m+n$.
\end{theorem}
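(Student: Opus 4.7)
The plan is to obtain~\eqref{eq:gen-greensop} by starting from the classical variation-of-constants formula (Proposition~\ref{th:di}) for the full fundamental system $t_1,\ldots,t_{m+n}$ and correcting it so as to trade off the $n$ ``extra'' initial conditions for the $n$ stability conditions at infinity. Proposition~\ref{th:di} yields the particular solution
\[
\Phi_{\mathrm{p}} = \sum_{i=1}^{m+n} t_i \, A \, \frac{\cof{i}{m+n}}{\wro{m+n}} \, g,
\]
which satisfies $\Phi_{\mathrm{p}}^{(k)}(0)=0$ for $k=0,\ldots,m+n-1$ but in general fails $\Phi_{\mathrm{p}}(\infty)=0$, since the unstable summands $r_j A(\cdots)g$ grow without bound as $u\to\infty$.

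The central algebraic move is the identity $-B=A-F$, with $F=A+B$ the definite integral over $\R^+$. Applied to each unstable component $i=m+j$, it rewrites $r_j A(\cdots)g$ as $r_j(-B)(\cdots)g+r_j F(\cdots)g$; the $(-B)$-factor is now the decaying integrator $-\int_u^\infty$, while $F(\cdots)g$ is a mere constant multiplied by the unstable function~$r_j$. I would remove this leftover unstable constant by further subtracting $(r_j+\srec{j})F(\cdots)g$, where $\srec{j}\in\mathrm{span}(s_1,\ldots,s_m)$ is a stable substitute chosen so that $r_j+\srec{j}$ vanishes at $u=0$ to order $m-1$; this lands precisely on the $-\sum_j\srec{j}F(\cdots)$ correction term in~\eqref{eq:gen-greensop} while preserving the first $m$ initial conditions.

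The recursive construction of $\srec{j}$ is tailored to this matching requirement. Reading $\coeff{i}{j}=\cof{i}{m+j}(0)/\wro{m+j-1}(0)$ through Cramer's rule on the partial Wronskian $\wro{m+j-1}(0)$, these are (up to sign) the unique scalars for which $\sum_{i=1}^{m+j-1}\coeff{i}{j}\,t_i$ matches $-r_j$ at $u=0$ to order $m+j-2$. Splitting the expansion into its $s$-part $\ssum{j}=\sum_{i\leq m}\coeff{i}{j}\,s_i$ and its $r$-part $\sum_{\ell<j}\coeff{m+\ell}{j}\,r_\ell$, and substituting the previously constructed $\srec{\ell}$ for $-r_\ell$ (a substitution accurate to order $m-1$ by the inductive hypothesis), one obtains exactly the recursion $\srec{j}=\ssum{j}-\coeff{m+1}{j}\srec{1}-\cdots-\coeff{m+j-1}{j}\srec{j-1}$, together with the desired Taylor match $(r_j+\srec{j})^{(k)}(0)=0$ for $k\leq m-1$.

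It then remains to verify that~\eqref{eq:gen-greensop} has the three defining properties of the Green's operator: (i)~$TG=\mathrm{Id}$, which is automatic because every modification amounts to adding a solution of the homogeneous equation to $\Phi_{\mathrm{p}}$; (ii)~the homogeneous initial conditions at~$0$, by the Taylor matching just described; and (iii)~the stability condition $\Phi(\infty)=0$, since each $s_i$ and each $\srec{j}$ is a stable solution and $B(\cdots)g\to 0$ at infinity under the standing integrability hypothesis on~$g$. I expect the main obstacle to lie in the third paragraph above: translating the partial-Wronskian ratios $\coeff{i}{j}$ into the precise Taylor-matching statement and verifying inductively that the triangular substitution $-r_\ell\rightsquigarrow\srec{\ell}$ propagates through all $j=1,\ldots,n$ without spoiling the required match to order $m-1$.
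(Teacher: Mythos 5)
Your route is correct and genuinely different from the paper's. The paper (Appendix~A) works entirely at the operator level: it begins with the factored form $G = (-1)^n A_{s_1}\cdots A_{s_m}B_{r_1}\cdots B_{r_n}$ of Proposition~\ref{greensop-fact}, inducts on~$n$, and collapses the nested integrals by combining the Sylvester-type Wronskian identity of Lemma~\ref{wronskian-formula} with the non-commutative operator identities for $AfB$, $BfB$, $FfB$ taken from~\cite{Rosenkranz2005}. You instead start from the variation-of-constants particular solution $\Phi_{\mathrm{p}} = \sum_i t_i\,A\,\tfrac{\cof{i}{m+n}}{\wro{m+n}}\,g$ of Proposition~\ref{th:di} applied to the \emph{whole} fundamental system, replace $A$ by $-B+F$ in the unstable block, and then subtract the homogeneous solution $\sum_j (r_j+\srec{j})\,c_j$ with constants $c_j = F\,\tfrac{\cof{m+j}{m+n}}{\wro{m+n}}\,g$, chosen so that the first $m$ vanishing conditions at $0$ survive while the unbounded contribution at infinity is cancelled. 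Your Cramer's-rule reading of the $\coeff{i}{j}$ does check out: the system $\sum_{i=1}^{m+j-1}c_i\,t_i^{(k)}(0)=-r_j^{(k)}(0)$ for $k\le m+j-2$ has solution $c_i=\coeff{i}{j}$, since the sign $(-1)^{m+j-1-i}$ from permuting $t_{m+j}$ into the $i$-th column, together with the sign $(-1)^{m+j+i}$ from expanding $\cof{i}{m+j}$ along that column, produces exactly the minus that absorbs the one in front of $r_j$; and the inductive substitution $-r_\ell \mapsto \srec{\ell}$ preserves the order-$(m-1)$ match because the underlying match holds to the higher order $m+j-2$. The two proofs buy different things: the paper's emphasizes the symbolic/algorithmic factorization structure of the Green's operator, whereas yours is more elementary (only $A+B=F$ and Cramer's rule, with no Sylvester identity and no auxiliary operator algebra) and makes the $\coeff{i}{j}$ and $\srec{j}$ transparent as Taylor-matching data that trade surplus initial conditions for the stability condition.
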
 \\

The proof of this result is given in Appendix A. There is a more
explicit way of specifying the sequence of functions $\srec{1},
\ldots, \srec{n}$ occurring in Theorem~\ref{thm:gen-greensop}.\\

\begin{proposition}
  \label{det-func}
  The functions $\srec{j}$ in Theorem~\ref{thm:gen-greensop} can be
  computed by solving the system $T \srec{} = \ssum{}$, where $T$ is
  the lower triangular matrix with entries
  \begin{equation*}
    T_{jk} =
      \begin{cases}
        \alpha_{m+k}(j) & \text{for $j>k$,}\\
        1 & \text{for $j=k$,}\\
        0 & \text{otherwise,}
      \end{cases}
  \end{equation*}
  while $\srec{}$ and $\ssum{}$ are respectively columns with entries
  $\srec{1}, \ldots \srec{n}$ and $\ssum{1}, \ldots, \ssum{n}$. Hence
  we have explicitly $\srec{j} = \det{T_j}/\!\det{T}$, where $T_j$ is
  the matrix resulting from $T$ by replacing its $j$-th column by
  $\ssum{}$.
\end{proposition}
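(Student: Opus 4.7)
The plan is to recognize the recursion defining $\srec{1}, \ldots, \srec{n}$ as an equivalent lower triangular linear system in the unknowns $\srec{j}$, after which the determinantal formula becomes a direct application of Cramer's rule.

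First, I would rewrite the recursion
\[
\srec{1} = \ssum{1}, \qquad \srec{j} = \ssum{j} - \coeff{m+1}{j}\,\srec{1} - \cdots - \coeff{m+j-1}{j}\,\srec{j-1}
\]
in the equivalent form $\srec{j} + \sum_{k=1}^{j-1}\coeff{m+k}{j}\,\srec{k} = \ssum{j}$ for each $j = 1, \ldots, n$. Reading the left-hand side row by row, one sees that it is exactly the $j$-th component of $T\,\srec{}$ with the matrix $T$ described in the proposition: $T_{jj}=1$, $T_{jk} = \coeff{m+k}{j}$ for $k<j$, and $0$ otherwise. This identification is pure bookkeeping, but it is the one step where the indexing conventions (coefficient $\coeff{m+k}{j}$ landing in row $j$, column $k$) have to be lined up carefully, and it is really the only thing requiring verification.

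Next, since $T$ is lower triangular with unit diagonal, $\det T = 1 \neq 0$, so $T$ is invertible and the linear system $T\,\srec{} = \ssum{}$ has a unique solution in the $\srec{j}$. A one-line induction on $j$ shows that this unique solution coincides with the sequence produced by the recursion, thereby confirming the matrix formulation.

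Finally, Cramer's rule applied to $T\,\srec{} = \ssum{}$ immediately yields $\srec{j} = \det T_j/\!\det T$, where $T_j$ is obtained from $T$ by replacing its $j$-th column by $\ssum{}$; the denominator $\det T = 1$ also explains why the formula can equivalently be written as $\srec{j} = \det T_j$. There is no substantive obstacle here beyond the indexing check mentioned above: the entire content of the proposition is the translation of a triangular recursion into its matrix/Cramer form.
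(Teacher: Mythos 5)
Your proof is correct and follows essentially the same route as the paper's: rewrite the recursion as the $j$-th row of $T\,\srec{}=\ssum{}$ and apply Cramer's rule. The only addition is the (true and mildly clarifying) remark that $\det T=1$ since $T$ is unit lower triangular, but this does not change the argument.
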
\\

\begin{proof}
  We have $\coeff{m+1}{j} \, \srec{1} + \cdots + \coeff{m+j-1}{j} \,
  \srec{j-1} + \srec{j} = \ssum{j}$, for $j>1$, by the definition of
  the $\srec{j}$. But this is clearly the $j$-th row of the matrix $T
  \srec{}$, while the recursion base $\srec{1} = \ssum{1}$ provides
  the first row. The explicit formula is an application of Cramer's
  rule.
\end{proof} \\

In either form, the functions $\srec{1}, \dots, \srec{n}$ can be
readily computed from the given fundamental system $s_1, \dots, s_m,
r_1, \dots, r_n$, and the representation~\eqref{eq:gen-greensop}
provides a closed form for the Green's operator of~\eqref{gen-bvp}.

\section{Asymptotic results for the renewal risk model}
\label{sec:expansion}
In the sequel, we will write $k(u)\sim l(u)$ if $\lim_{u\to\infty}\frac{k(u)}{l(u)}=1$ for some functions $k$ and $l$. Assume that both the interclaim distribution and the
claim size distribution have rational Laplace transform, i.e. their densities satisfy the ODE \eqref{eq:tau}
and \eqref{eq:x}, respectively. Assume that the solutions of Equation
(\ref{eq:ode}) are of the form $t_i(u) \sim u^{\beta_i} \, e^{y_i u}$, i.e.
\begin{equation}\label{C1}
t_i(u)\sim \exp\left\{A\eta_i(u)\right\},\quad i=1,\ldots, n+m,
\end{equation}
with
\begin{equation}\label{C2}
\eta_i(u)\sim y_i+\frac{\beta_i}{u},\quad i=1,\ldots, n+m
\end{equation}
and
\begin{equation}\label{C3}
y_m<\ldots < y_1\leq 0<y_{m+1}<\ldots<y_{m+n}
\end{equation}
(so the $\eta_i$ are not asymptotically equivalent). \\

\begin{remark}\rm Note that for the premium functions $p(u)=c+\epsilon u$, $p(u) = c + \frac{1}{1+\epsilon u}$ and $p(u)= c \exp{\epsilon/u}$, the corresponding $t_i$ fulfill the conditions (\ref{C1})--(\ref{C3}). For  $m=n=1$, a more detailed analysis is presented in Section \ref{sec:asygeneric}.
\end{remark} \\

%

Define the constants $h_k=\gamma_k-\sum_{j=1}^n \alpha_{jk} \, F \, \frac{\cof{m+j}{m+n}}{\wro{m+n}}$
with $\gamma_k$ appearing in \eqref{eq:genform} and $\alpha_{jk}$ as defined in \eqref{allpha}.
%
 For a permutation $\varphi$ on $\{1, \dots, m+n\}$ we define $$ \pi_i=\frac{\sum_{\varphi(i)=n+m}(-1)^{{\rm sgn}\varphi}\prod_{k\neq i}y_k^{\varphi(k)}}{\sum_{\varphi} (-1)^{{\rm sgn}\varphi}
 \prod_{k=1}^{n+m}y_k^{\varphi(k)}},$$  where ${\rm sgn}{\varphi}$ denotes the parity of $\varphi$.\\

\begin{theorem}\label{th:main2}
If $g(u)\sim e^{-\nu u}$ for $\nu>-y_1$, then under (\ref{C1})--(\ref{C3}) the asymptotic expansion
\begin{equation}\label{expansion}
\Phi(u)\approx\sum_{i=1}^{m+1}\vartheta_i(u) \end{equation}
holds, with $\vartheta_i(u)=h_is_i(u)\; (i=1,\ldots,m)$ and $$\vartheta_{m+1}(u)\sim \sum_{i=1}^{m+n} \frac{\pi_i}{y_i+\nu} \,g(u).$$
\end{theorem}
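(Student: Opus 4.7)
The plan is to combine the representation $\Phi(u)=\sum_{k=1}^m\gamma_k s_k(u)+Gg(u)$ from Theorem~\ref{Th:main} with the explicit Green's operator formula~\eqref{eq:gen-greensop}, and then to perform a term-by-term asymptotic analysis of the $Gg$-part under the structural hypotheses~(\ref{C1})--(\ref{C3}).

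First I would substitute~\eqref{eq:gen-greensop} into~\eqref{eq:genform} and split the first sum in~\eqref{eq:gen-greensop} by the type of $C_i$ (i.e.\ $C_i=A$ for $i\le m$ and $C_i=-B$ for $i>m$). By Proposition~\ref{det-func} each $\srec{j}$ is a fixed linear combination of $s_1,\ldots,s_m$ whose scalar coefficients are built from the $\coeff{i}{j}$'s defined in~\eqref{allpha}. Consequently the subtracted $F$-sum in~\eqref{eq:gen-greensop} contributes constant multiples of each $s_k$, and absorbing these into the original coefficient $\gamma_k$ produces precisely the constants $h_k$ of the statement; this yields the first $m$ terms $\vartheta_i(u)=h_is_i(u)$. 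The remainder $\vartheta_{m+1}(u)$ is then a sum of terms of the form $t_i\,C_i\bigl(\tfrac{\cof{i}{m+n}}{\wro{m+n}}\,g\bigr)$ (with the constant $F$-part already peeled off into the $h_k$'s), where $i$ ranges over both stable and unstable indices.

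Next I would determine the asymptotic of $\vartheta_{m+1}$. The ansatz $t_k(u)\sim u^{\beta_k}e^{y_ku}$ implies $t_k^{(l-1)}(u)\sim y_k^{l-1}e^{y_ku}$, so $\wro{m+n}(u)$ equals asymptotically a Vandermonde determinant in the $y_k$'s times $\prod_k e^{y_ku}$, while $\cof{i}{m+n}(u)$, obtained by replacing the $i$-th column of the Wronskian matrix by the $(m+n)$-th unit vector and expanding along that column, factors asymptotically through $\prod_{k\ne i}e^{y_ku}$ and a Vandermonde-type minor. The resulting signed-permutation sums yield
\[
\frac{\cof{i}{m+n}(u)}{\wro{m+n}(u)}\sim\pi_i\,e^{-y_iu},
\]
with $\pi_i$ exactly as in the statement. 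Combined with $g(u)\sim e^{-\nu u}$, the integrand of each $C_i$ becomes $\sim \pi_ie^{-(y_i+\nu)u}$; the hypothesis $\nu>-y_1$ together with $y_i>0$ for $i>m$ ensures $y_i+\nu\ne0$ for every $i$, so explicit integration produces the factor $1/(y_i+\nu)$, and multiplication by $t_i(u)\sim e^{y_iu}$ restores the order $e^{-\nu u}=g(u)$. Summing over $i=1,\ldots,m+n$ delivers $\vartheta_{m+1}(u)\sim\sum_i\frac{\pi_i}{y_i+\nu}\,g(u)$.

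The principal technical obstacle will be the combinatorial identification of the Vandermonde cofactor ratio with the explicit $\pi_i$ of the statement, which requires careful bookkeeping of the sign produced by expanding $\cof{i}{m+n}$ along its $i$-th column alongside the signatures in the Vandermonde permutation sum. A secondary subtlety is that applying $A$ to $\pi_ie^{-(y_i+\nu)u}$ also yields a residual term of order $s_i(u)$; one must argue, using the ordering in~(\ref{C3}) and the hypothesis $\nu>-y_1$, that these residues are absorbed into the $h_k$-collection performed in step~1 or are strictly subdominant to the principal $g$-order contribution, so that the expansion genuinely terminates at the $(m+1)$-th step as claimed.
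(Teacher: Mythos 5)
Your proposal follows essentially the same route as the paper's proof: both substitute the explicit Green's operator formula~\eqref{eq:gen-greensop} into~\eqref{eq:genform}, absorb the $\srec{j}\,F$ contributions (which by Proposition~\ref{det-func} are constant combinations of the $s_k$) into the $h_k$, use the Leibniz/Vandermonde structure of the Wronskian ratio $\cof{i}{m+n}/\wro{m+n}\sim \pi_i/t_i$ to reduce $\vartheta_{m+1}$ to $\sum_i \pi_i\,t_i\,C_i(g/t_i)$, and then extract the factor $1/(y_i+\nu)$ from the asymptotic of each $C_i(g/t_i)$. The only cosmetic difference is that you integrate explicitly where the paper invokes l'H\^opital's rule, and you explicitly flag the need to absorb the residual $t_i\,F(\,\cdot\,)$ contributions (from writing $A=F-B$) back into the $h_k$'s -- a subtlety the paper's terse "after some calculations" glosses over.
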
\\

This is equivalent to saying  that $\lim_{u\to\infty}\frac{\Phi(u)-\sum_{i=1}^k\vartheta_i(u)}{\vartheta_{k+1}(u)}= 1,$ for $k=1,\ldots, m.$

\begin{proof}
Note that by (\ref{C1})--(\ref{C2}),
$t^{(k)}_i(u)\sim y_i^{k}e^{A\eta_i(u)}.$
Using \eqref{eq:gen-greensop} and the Leibniz formula for the determinant,
after some calculations one gets that expansion (\ref{expansion})
with $\vartheta_k(u)=l_kt_k(u)$ and
$$\vartheta_{m+1}(u) = \sum_{i=1}^{m+n} \pi_i t_i(u) C_i \frac{g}{t_i}(u).$$
Using l'H\^{o}pital's rule completes the proof.
\end{proof}

\section{Compound Poisson risk process with exponential claims}
\label{sec:e1e1}

Let us now focus on the case of a compound Poisson model
(exponential interclaim times with mean $1/\lambda$) with exponential claim sizes with mean $\mu$ and a generic
premium function $p(u)$. The differential equation in~\eqref{eq:ode}
has order two in this case, so we expect to have one stable solution $s$ and
one unstable solution $r$. In fact, here we can relax the notion
of an unstable solution, allowing any function where
\begin{equation*}
  r(\infty) = \lim_{u \rightarrow \infty} r(u)
\end{equation*}
exists and is different from zero (so the limit does not necessarily have to be infinity). The
reason for this extension is that the basic argument for the ansatz
\begin{equation*}
  \label{eq:lincomb}
  \Phi(u) = \gamma_s s(u) + \gamma_r r(u)
\end{equation*}
carries over: Every solution of~\eqref{eq:ode} must be of the
form~\eqref{eq:lincomb} since $r(u), s(u)$ forms a fundamental
system. But then the stability condition $\Phi(\infty) = 0$ can only be
satisfied if $\gamma_r = 0$ because we require $s(\infty) = 0$. This is why
the form~\eqref{eq:genform} is still justified in the special case
$n=1$ with $\gamma_1 = \gamma_s$. But note that this argument fails when
there are more than two unstable solutions since they can cancel out
unless we take some further precautions (e.g.\@ requiring them to be
of the same sign).

\subsection{Closed-form solutions for generic premium}
\label{sec:exactgeneric}

For a discount factor $\delta > 0$,
the expected discounted penalty functions satisfies the second order
LODE
\begin{equation*}
  \left(D+\mu\right) \left(-p(u)D+\delta +\lambda \right)  \Phi(u)
  -\lambda \mu \, \Phi(u) = \lambda (D +\mu) \, \omega(u).
\end{equation*}
Expanding the operators, the equation is equivalent with
\begin{equation*}
 \left(- p(u) D^2-(\mu \, p(u) +p^\prime(u)-\lambda -\delta) \, D +\delta
   \mu \right) \Phi(u) = \lambda (D +\mu) \, \omega(u).
\end{equation*}
Assuming that $p(u) \ne 0$ for all $u\geq 0$, this is further
equivalent to
\begin{equation}
  \label{eq:odee1e1}
  \left(D^2+\left(\mu +\frac{p^\prime(u)}{p(u)}-\frac{\lambda
        +\delta}{p(u)}\right)D - \frac{\delta \mu}{p(u)} \right) \Phi(u) = g(u),
\end{equation}
with $g(u)= -\frac{\lambda}{p(u)} (D +\mu) \, \omega(u)$. Furthermore,
we assume that $p(u)$ is chosen in such a way that the associated
homogeneous solution has a fundamental system $s, r$ with one stable
solution $s$ and one unstable solution $r$ with Wronskian $w=\wro{2} = s
r' - s' r$ nonzero on $\mathbb{R}^+$. Then the Green's operator for the
boundary problem for the Gerber-Shiu function~$\Phi$ is given by
Theorem~\ref{thm:gen-greensop} with $s_1= s$ and $r_1 = r$, namely
\begin{equation} \label{eq:greenop} Gg(u)= \Big(-s(u) \int_0^u
  \tfrac{r(v)}{w(v)} - r(u) \int_u^{\infty}
  \tfrac{s(v)}{w(v)} +\tfrac{r(0)}{s(0)}s(u)\int_0^{\infty}
  \tfrac{ s(v)}{w(v) }\Big) g(v)\, dv.
\end{equation}
For calculating the full expression
\begin{equation} \label{eq:m}
\Phi(u)= \gamma \, s(u) +Gg(u)
\end{equation}
 we
have to determine the constant $\gamma$. Evaluating the
integro-differential equation \eqref{eq:ide} at zero, one obtains
\begin{equation*}
 -c \, \Phi'(0)+(\lambda+\delta) \, \Phi(0) = \lambda \, \omega(0)
\end{equation*}
and therefore
\begin{equation} \label{eq:const}
  \gamma =\frac{\lambda \, \omega(0) + c \, (Gg)'(0) }{(\lambda
    +\delta) \, s(0) - c \, s'(0)} = \frac{\lambda \, \omega(0) +
    c \, \frac{r(0) s'(0) - r'(0) s(0)}{s(0)} \int_0^{\infty}
    \frac{s(v)}{w(v)} \, g(v) \, dv}{(\lambda +\delta) \, s(0) -
    c \, s'(0)}
\end{equation}
for the required constant.
 For $\delta=0$, the LODE~\eqref{eq:odee1e1} is of
 first order in $\Phi'$, and its associated homogeneous equation has an
 unstable solution $r(u)=1$ and a stable solution
 \begin{equation} \label{eq:asm} s(u) = \int_u^{\infty} \exp\Big(-\mu v
   + \lambda \int_0^v \tfrac{dy}{p(y)}\Big) \, \frac{ dv}{p(v) }
 \end{equation}
 (cf.~\cite{AsmAlb2010}).
 For the fundamental system $s,r$, the Wronskian is just $w=\wro{2} =
 -s'$, and the Green's operator~\eqref{eq:greenop} specializes to
 \begin{equation*}
   Gg(u)= \Big(s(u) \int_0^u \tfrac{1}{s'(v)}  +   \int_u^{\infty}
   \tfrac{s(v)}{s'(v)} -\tfrac{s(u)}{s(0)}\int_0^{\infty} \tfrac{
     s(v)}{s'(v) }\Big) \, g(v)\, dv
 \end{equation*}
 while the constant in $\Phi(u)= \gamma \, s(u) +Gg(u)$ is now given by
 \begin{equation*}
   \gamma = \frac{\lambda \omega(0) - p(0) \tfrac{s'(0)}{s(0)}
     \int_0^{\infty} \tfrac{s(v)}{s'(v)}\,g(v)\,dv}{\lambda s(0) - p(0) s'(0)}.
 \end{equation*}
 Thus the Gerber-Shiu function can be written generically as
 \begin{align*}
   \Phi(u) &= \frac{\lambda \omega(0) - p(0) \tfrac{s'(0)}{s(0)}
     \int_0^{\infty} \tfrac{s(v)}{s'(v)}\,g(v)\,dv}{\lambda s(0) - p(0)
     s'(0)} \, s(u)\\[1ex]
   &\qquad + \Big(s(u) \int_0^u \tfrac{1}{s'(v)} + \int_u^{\infty}
   \tfrac{s(v)}{s'(v)} -\tfrac{s(u)}{s(0)}\int_0^{\infty} \tfrac{
     s(v)}{s'(v) } \Big) \, g(v)\, dv
 \end{align*}
 in terms of $s(u)$.\\

\begin{remark}\rm  For $\delta=0$ and $w=1$, one has $g=0$ and $\psi(u)= \gamma \,
s(u)$, recovering \eqref{dwa} for the ruin probability.
\end{remark}

\subsection{Closed-form solutions for some particular premium structures}
\label{sec:exactpart}

{\bf A) Linear premium:} As discussed in Section \ref{sec:intro}, the linear function $p(u)=c+\epsilon u$ can be interpreted as
describing investments of the surplus into a bond with a fixed interest rate
$\epsilon >0$; see for example~\cite{Seg1942}.
For $\delta>0$ and $p(u)=c+\epsilon u$, we can compute a fundamental
system for the second-order LODE
\begin{equation*}
  \Big(D^2+\left(\mu +\tfrac{\epsilon}{c+ \epsilon u}-\tfrac{\lambda
        +\delta}{c+ \epsilon u}\right)D -\tfrac{\delta \mu}{c+\epsilon
      u} \Big) \, \Phi(u) = - \tfrac{\lambda}{c+ \epsilon u} \, (D
    +\mu) \, \omega(u)
\end{equation*}
in the form
\begin{equation}
  \label{m1m2}
  \begin{aligned}
    s(u) &= U\big(\tfrac{\delta}{\epsilon}+1, \,
    \tfrac{\lambda+\delta}{\epsilon} +1, \, \mu u + \tfrac{\mu
      c}{\epsilon} \big)\, ( \epsilon u +
    c)^{\tfrac{\lambda+\delta}{\epsilon}} \, \exp(-\mu u),\\
    r(u) &= M\big(\tfrac{\delta}{\epsilon}+1, \,
    \tfrac{\lambda+\delta}{\epsilon} +1, \, \mu u + \tfrac{\mu
      c}{\epsilon} \big)\, ( \epsilon u +
    c)^{\tfrac{\lambda+\delta}{\epsilon}} \, \exp(-\mu u),
  \end{aligned}
\end{equation}
where $M$ and $U$ denote the usual Kummer functions as
in~\cite[\S13.1]{AbrSte1964}. For $u \rightarrow \infty$, the
estimate in~\S13.1.8 yields
\begin{equation} \label{eq:s}
 s(u) = K_1 \, ( \epsilon u+ c)^{\lambda/\epsilon-1} \, \exp(-\mu u) \, \Big(1+ \bigo(\tfrac{1}{\epsilon u + c}) \Big) \rightarrow 0,
\end{equation}
while the estimate in~\S13.1.4 yields
\begin{equation} \label{eq:r}
  r(u) = K_2 \, (\epsilon u+c)^{\delta/\epsilon} \, \Big(1+
  \bigo(\tfrac{1}{\epsilon u +c}) \Big)=K_2 \, (\epsilon
  u+c)^{\delta/\epsilon}  \,+\bigo((\epsilon u+c)^{\delta/\epsilon -
    1}) \rightarrow \infty,
\end{equation}
where $K_1$ and $K_2$ are some constants. Hence $s$ is indeed a stable
and $r$ an unstable solution. Using~\S13.1.4
one derives the Wronskian
\begin{equation*}
  \wro{2} = \tfrac{\Gamma(\frac{\lambda+\delta}{\epsilon})}{\Gamma(\frac{\delta}{\epsilon})}\tfrac{\epsilon(\lambda +\delta)}{\delta} \big(\tfrac{\epsilon}{\mu}\big)^{(\lambda+\delta)/\epsilon}  (u\epsilon + c)^{(\lambda+\delta)/\epsilon-1} \exp(-\mu u+ \tfrac{\mu c}{\epsilon}).
\end{equation*}
Substituting these expressions in \eqref{eq:greenop}, we end up with
\begin{align*}
  & Gg(u)=
  \tfrac{\Gamma(\delta/\epsilon +
    1)}{\Gamma((\delta + \lambda)/(1+\epsilon))}
  \tfrac{1}{\epsilon}
  \left(\tfrac{\mu}{\epsilon}\right)^{(\lambda+\delta)/\epsilon}
  (\epsilon u +c)^{(\lambda+\delta)/\epsilon} \exp(-\mu u
  -\tfrac{\mu c}{\epsilon})\times \\
  &\quad \Big(-U(u) \int_0^u M(v) - M(u) \int_u^{\infty} U(v) +
  \tfrac{M(0)}{U(0)} \, U(u) \int_0^{\infty} U(v)\Big)\, g(v)\,dv,
\end{align*}
where $U(u)$ and $M(u)$ are  Kummer functions appearing on the right hand side
of ~\eqref{m1m2}. This jointly with \eqref{eq:const} is sufficient to determine the discounted penalty function in \eqref{eq:m}.

{\bf B) Exponential premium: } In general, an exponential premium function leads to intractable
results. However, for
\begin{equation*}
  p(u)=c(1+ e^{-u})
\end{equation*}
the probability of ruin can be worked out from the expression in Section \ref{sec:exactgeneric}:
\begin{equation*}
  \psi(u)= -\frac{(1+\frac{\lambda}{c})F(\frac{\lambda}{c},
    \mu; 1+\frac{\lambda}{c}; e^u+1)(\frac{1}{2} e^u
    +\frac{1}{2})^{\frac{\lambda}{c}}}{2\mu F(1+\mu,
    1+\frac{\lambda}{c}; 2+\frac{\lambda}{c}; 2)}
\end{equation*}
where $F(a,b;c;z)={_2}F_1(a,b;c;z)$ stands for the hypergeometric
function, see e.g. \cite{AbrSte1964}.

{\bf C) Rational premium: } For a basic rational premium like
\begin{equation*}
  p(u) = c + \frac{1}{1+u},
\end{equation*}
the exact symbolic form for the probability of ruin can be
computed up to quadratures, namely
\begin{equation*}
  \psi(u)=\frac{\lambda (c+1)^{\lambda/c^2} \int_u^\infty e^{-u (c
      \mu-\lambda)/c} (c+c u+1)^{-(\lambda+c^2)/c^2}
    (1+u)du}{1+\lambda (c+1)^{\lambda/c^2} \int_0^\infty e^{-u (c
      \mu-\lambda)/c} (c+c u+1)^{-(\lambda+c^2)/c^2} (1+u)du}.
\end{equation*}

{\bf D) Quadratic premium: } For the quadratic function $p(u)= c + u^2$, the probability of ruin
can be determined as
\begin{equation*}
\psi(u)= \frac{  \lambda \int_u^{\infty} e^{-(-\lambda
    \arctan(x/\sqrt{c})+\mu x \sqrt{c})/\sqrt{c}}/(c+x^2)\,
  dx}{1+\lambda \int_0^{\infty}e^{-(-\lambda \arctan(x/\sqrt{c})+\mu x
    \sqrt{c})/\sqrt{c}}/(c+x^2)\, dx}.
\end{equation*}
%

\subsection{Asymptotic results for generic premium}
\label{sec:asygeneric}

Assume the LODE
\begin{equation}
  \label{eq:acc}
  \Phi^{(n)} + c_{n-1}(u) \, \Phi^{(n-1)} + \cdots + c_0(u) \, \Phi =0
\end{equation}
has complex coefficients $c_i(u)$ continuous on $\mathbb{R}^+$ and define its
characteristic equation as
\begin{equation} \label{chareq:acc}
 y^n+ c_{n-1}(u) \, y^{n-1} + \cdots + c_0(u) =0.
\end{equation}
Then the asymptotic behavior of solutions of~\eqref{eq:acc} for
$u\rightarrow \infty$ essentially depends on the behavior of the roots
$y_1(u), \ldots, y_n(u)$ of~\eqref{chareq:acc} as $u \rightarrow
\infty$ (see e.g. ~\cite[\S5.3.1, p.~250]{Fed1993}), which will be exploited below.

\subsubsection{Probability of ruin}

When $\delta=0$ and $w=1$, the expected discounted penalty is the
probability of ruin. For this quantity we have the following
asymptotic estimate (we use the convention $p(\infty)=\lim_{u\to\infty}p(u)$):\\

\begin{proposition} \label{ruinprobas}
  \begin{enumerate}
  \item If $p(\infty) =c$, where $c$ is constant, then
    $$\psi(u) \sim\frac{\mu}{\lambda}\gamma \exp\Big(-\mu u + \lambda \int_0^u \tfrac{dw}{p(w)}\Big), \quad u\rightarrow \infty. $$
  \item  If $p(\infty) =\infty$, then
    $$\psi(u) \sim\frac{\mu}{\lambda}\gamma \frac{1}{p(u)}\exp\Big(-\mu u +
    \lambda \int_0^u \tfrac{dw}{p(w)}\Big), \quad u\rightarrow
    \infty. $$
  \end{enumerate}
\end{proposition}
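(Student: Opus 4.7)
The starting point is the explicit formula at the end of Section~\ref{sec:exactgeneric}: for $\delta=0$ and $w=1$ one has $g\equiv0$, so $\psi(u)=\gamma\, s(u)$, where $s(u)=\int_u^{\infty} p(v)^{-1}\exp(h(v))\,dv$ with $h(v):=-\mu v+\lambda\int_0^v dw/p(w)$ and $\gamma$ is the boundary constant from~\eqref{eq:const}. The proposition therefore reduces to extracting the asymptotic behaviour of the single integral $s(u)$ as $u\to\infty$. The standing stability hypothesis $p(\infty)>\lambda/\mu$ (recalled in Section~\ref{sec:intro}) ensures that $h'(v)=-\mu+\lambda/p(v)$ has a strictly negative limit, so $h(u)\to -\infty$ and $s(u)\to 0$. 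This is the setting for L'H\^opital's rule applied to a suitable ratio $s(u)/A(u)$.

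In case~(1), with $p(\infty)=c$ finite, the natural candidate is $A(u)=e^{h(u)}$. Using $s'(u)=-e^{h(u)}/p(u)$ and $A'(u)=h'(u)\,A(u)$, one obtains
\begin{equation*}
\frac{s'(u)}{A'(u)}=\frac{1}{\mu\,p(u)-\lambda}\ \longrightarrow\ \frac{1}{\mu c-\lambda}\in(0,\infty),
\end{equation*}
so L'H\^opital delivers $s(u)\sim e^{h(u)}/(\mu c-\lambda)$, which after multiplication by $\gamma$ produces the first stated asymptotic. In case~(2), with $p(\infty)=\infty$, the same choice $A(u)=e^{h(u)}$ gives $s'/A'\to 0$, telling us that the bare exponential overshoots the true decay by one factor of $p(u)$. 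Taking instead $A(u)=e^{h(u)}/p(u)$ and differentiating gives $A'(u)/A(u)=h'(u)-p'(u)/p(u)$, so that
\begin{equation*}
\frac{s'(u)}{A'(u)}=-\frac{1}{h'(u)-p'(u)/p(u)}\ \longrightarrow\ \frac{1}{\mu},
\end{equation*}
under the mild regularity condition $p'(u)/p(u)\to 0$, since $h'(u)\to-\mu$. Hence $s(u)\sim e^{h(u)}/(\mu\,p(u))$, and multiplying by $\gamma$ yields the second claim.

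The main obstacle is the recognition in case~(2) that $e^{h(u)}$ alone is not of the right order: one sees this either through one iteration of integration by parts, where the leading contributions of $\mu\int_u^\infty e^{h}$ and $e^{h(u)}$ cancel exactly and force an expansion to the next order, or more directly by observing that the first L'H\^opital limit vanishes and then retrying with the sharper denominator $e^{h}/p$. A secondary technical point is the auxiliary hypothesis $p'(u)/p(u)\to 0$ needed to evaluate the refined limit; this holds for all premium classes considered later (linear $c+\epsilon u$, rational $c+1/(1+u)$, quadratic $c+u^2$, and exponential $c\exp(\epsilon/u)$) and should be flagged as a standing assumption on $p$. Matching the precise multiplicative factor on the right-hand side is then a routine algebraic manipulation using the explicit expression for $\gamma$ from~\eqref{eq:const} together with the identity $p(0)s'(0)=-1$.
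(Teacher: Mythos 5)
Your core strategy---reduce to $\psi(u)=\gamma\,s(u)$ and apply l'H\^opital to a well-chosen ratio $s(u)/A(u)$---is the right one and is at bottom the same l'H\^opital argument the paper uses, just carried out more directly: the paper first performs an integration by parts to write $s$ in terms of an auxiliary function $f$ and then invokes l'H\^opital, whereas you differentiate the ratio once and read off the limit. Your version is in fact cleaner, and the identification of the extra $1/p(u)$ factor in case~(2) by noticing that the first candidate gives limit $0$ is exactly the right diagnostic.

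However, there is a genuine gap at the very end. You correctly compute
$$s(u)\sim \frac{e^{h(u)}}{\mu c-\lambda}\quad\text{(case 1)},\qquad s(u)\sim\frac{e^{h(u)}}{\mu\,p(u)}\quad\text{(case 2)},$$
with $h(u)=-\mu u+\lambda\int_0^u dw/p(w)$. (Both are easy to confirm: for constant $p\equiv c$ the first is an exact identity, and both match the leading term $s(u)\sim e^{h(u)}/(\mu p(u)+p'(u)-\lambda)$ appearing in the paper's proof of Theorem~\ref{th:asy}.) These multiply by $\gamma$ to give $\psi(u)\sim \gamma\,e^{h(u)}/(\mu c-\lambda)$ and $\psi(u)\sim \gamma\,e^{h(u)}/(\mu p(u))$. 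The proposition, by contrast, asserts the prefactor $\tfrac{\mu}{\lambda}\gamma$ in both cases. Your closing claim that these agree after a ``routine algebraic manipulation using the explicit expression for $\gamma$'' cannot be right: the constant $\gamma$ multiplies $s(u)$ on both sides of the asymptotic, so no re-expression of $\gamma$ can convert $1/(\mu c-\lambda)$ into $\mu/\lambda$; equality would force $\lambda(1+\mu)=\mu^2 c$, which is not a standing assumption. You should have flagged this as a discrepancy rather than waving it away. (For what it's worth, the mismatch appears to originate in the proposition's statement itself---the paper's own proof contains sign slips, e.g.\ integration by parts gives $s=\mu f+f'$ rather than $\mu f-f'$, and $f''=(-\mu+\lambda/p)f'$ rather than $\mu f'(1+1/p)$---but the correct course for you was to note that the constant you obtain does not match the one asserted.) The secondary point you raise, that case~(2) needs $p'(u)/p(u)\to 0$ as an extra regularity hypothesis, is well taken and worth keeping.
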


\begin{proof}
Integration by parts in \eqref{eq:asm} gives
\begin{equation*}
 s(u) = \frac{\mu}{\lambda}\int_u^{\infty}  \exp\Big(-\mu v + \lambda \int_0^v \tfrac{dw}{p(w)}\Big) \,  dv - \frac{1}{\lambda}\exp\Big(-\mu u + \lambda \int_0^u \tfrac{dw}{p(w)}\Big).
\end{equation*}
with $s(0)=\frac{\mu}{\lambda} \hat{h}(\mu) -\frac{1}{\lambda},$ $s'(0) =-\frac{1}{ p(0)},$
where $\hat{h}$ denotes the Laplace transform of $h(u)= \exp(\lambda \int_0^u \tfrac{dw}{p(w)})$.
Letting  $f(u) = \frac{1}{\lambda}\int_u^{\infty}  \exp\Big(-\mu v + \lambda \int_0^v \tfrac{dw}{p(w)}\Big) \,  dv,$ one gets
\begin{eqnarray*}\psi(u) = \gamma s(u) =\mu f(u)-f'(u) = \mathcal{L}^*_X(\frac{d}{du}) f(u),
\end{eqnarray*}
with $\gamma=\frac{\lambda}{\lambda s(0) -p(0) s'(0)}$.
Note that
$f''(u)=\mu f'(u)\left(1+\frac{1}{p(u)}\right). $
To prove the first part of the proposition we need to show that  $\psi(u)= -\mu \frac{1}{1+c}f'(u)(1+o(1)). $
According to our previous observation,
$ \frac{\mu f'(u) -f''(u)}{-\mu f''(u)} =\frac{1}{1+p(u)},$ which completes the proof using l'H\^{o}pital rule.
The second part can be proved similarly.
\end{proof}

\subsubsection{Expected discounted penalty}

We consider two cases of premium functions:
\begin{enumerate}
\item[P1.] the premium function behaves like a constant at infinity \begin{equation}\label{pierwszyprzypadek}
p(\infty)=c,\qquad p^\prime(u)= O\left(\frac{1}{u^2}\right);
\end{equation}
or
\item[P2.] the premium function explodes at infinity, $p(\infty)=\infty$ as
\begin{equation}\label{drugiprzypadek}
p(u)=c+\sum_{i=1}^l\epsilon_i u^i,\qquad c>0.
\end{equation}
\end{enumerate}
The first case is satisfied by the rational and exponential premium functions.
The second case is satisfied by the linear and quadratic premium functions (see Section~\ref{sec:exactpart}).
Consider first the homogenous equation, \eqref{eq:odee1e1} with $g=0$,  i.e. equation \eqref{eq:ode} with $T$ given in \eqref{opT}
with
$$
c_0(u)=-\frac{\delta \mu}{p(u)},\quad c_1(u)=\mu +\frac{p^\prime(u)}{p(u)}-\frac{\lambda +\delta}{p(u)}.$$
After tedious calculations one can check that in the case (\ref{pierwszyprzypadek}) we have
$$\lim_{u\to\infty}\frac{c_1(u)}{\sqrt{c_0(u)}}<\infty$$
so that Conditions 1) and 2) of \cite[p. 252]{Fed1993} are satisfied.
In the second case (\ref{drugiprzypadek}) we have
$$\lim_{u\to\infty}\frac{c_0(u)}{c_1^2(u)}=0$$
and Conditions 1) and 2') of \cite[p. 254]{Fed1993} hold. From ~\cite[p.252]{Fed1993} we hence know that for both cases \eqref{pierwszyprzypadek} and \eqref{drugiprzypadek} the asymptotic behavior of the solution of \eqref{eq:odee1e1} is

\begin{equation}\label{ti}
t_i(u) \sim \exp\left\{\int_{0}^u (\varrho_i(t) +\varrho^{(1)}_i(t) )dt \right\},\quad i=1,2, \end{equation}
where
\begin{equation*}
\varrho_1=\frac{-\left(\mu +\frac{p'(u)}{p(u)}-\frac{\lambda +\delta}{p(u)}\right) -
\sqrt{\left(\mu +\frac{p'(u)}{p(u)}-\frac{\lambda +\delta}{p(u)}\right)^2+4\tfrac{\delta \mu}{p(u)}}}{2}
\end{equation*}
and
\begin{equation*}
\varrho_2=\frac{-\left(\mu +\frac{p'(u)}{p(u)}-\frac{\lambda +\delta}{p(u)}\right) +
\sqrt{\left(\mu +\frac{p'(u)}{p(u)}-\frac{\lambda +\delta}{p(u)}\right)^2+4\tfrac{\delta \mu}{p(u)}}}{2}
\end{equation*}
are the negative and positive solution, respectively, of the characteristic equation
\begin{equation} \label{eq:char}
 x^2 +\left(\mu +\frac{p'(u)}{p(u)}-\frac{\lambda +\delta}{p(u)}\right) x -\frac{\delta \mu}{p(u)} =0.
\end{equation}
Here $\varrho_1^{(1)}$ and $\varrho_2^{(1))}$  are defined by
 $$\varrho^{(1)}_i(u) = -\frac{\varrho'_i(u)}{2\varrho_i(u) + \left(\mu +\frac{p'(u)}{p(u)}-\frac{\lambda +\delta}{p(u)}\right)}
 =\frac{\varrho'_i(u)}{\sqrt{\left(\mu +\frac{p'(u)}{p(u)}-\frac{\lambda +\delta}{p(u)}\right)^2+4\tfrac{\delta \mu}{p(u)}}}.$$
\begin{remark}\rm Note that if the premium function $p(u)$ satisfies conditions \eqref{pierwszyprzypadek} and \eqref{drugiprzypadek}, the solutions $t_i(u)$ will be of the asymptotic form \eqref{C1}, where $\eta_i=\varrho_i+\varrho_i^{(1)}$.
\end{remark}
In order to complete the asymptotic analysis of $\Phi(u)$ for large $u$, recall that the Gerber-Shiu function $\Phi$ is given by
$\Phi(u)= \gamma s(u)+Gg(u),$
for a normalizing constant $\gamma$. \\

\begin{theorem}\label{th:asy}
 Under the assumptions \eqref{pierwszyprzypadek} and \eqref{drugiprzypadek} regarding the premium function, the asymptotics of the Gerber-Shiu function are described by
$$\Phi(u) \sim h_1 s(u) +K_1 g(u), $$ with the exception
$$\Phi(u) \sim h_1 s(u) +K_2 u \, g(u) $$ for the case \eqref{drugiprzypadek} with $l=1$. Here $h_1=\gamma-\int_0^{\infty} \tfrac{ s(v)}{s'(v) }g(v)\, dv.$
\end{theorem}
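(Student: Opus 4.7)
The plan is to specialize Theorem~\ref{th:main2} to the case $m=n=1$ where it applies cleanly, and to handle by direct asymptotic analysis the remaining polynomial-premium cases where the small root of the characteristic equation vanishes. Throughout, I would write $\Phi(u) = \gamma s(u) + Gg(u)$ via Theorem~\ref{Th:main}, using the explicit form \eqref{eq:greenop} of $G$. Any convergent prefactors multiplying $s(u)$ get absorbed into the leading coefficient $h_1$, so the task reduces to extracting the leading behaviour of the $A$- and $B$-integrals in \eqref{eq:greenop}.

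For case \eqref{pierwszyprzypadek}, the coefficients of the characteristic equation \eqref{eq:char} converge to constants with corrections of order $1/u^2$, yielding a limiting quadratic whose constant term $-\delta\mu/c$ is strictly negative; thus the two roots $y_1<0<y_2$ are real and bounded away from $0$, and conditions \eqref{C1}--\eqref{C3} hold with $\beta_i=0$. Theorem~\ref{th:main2} then directly delivers $\Phi \sim h_1 s + K_1 g$, with $K_1 = \pi_1/(y_1+\nu) + \pi_2/(y_2+\nu)$; evaluating the permutation formula for $m=n=1$ gives $\pi_1 = -1/[y_1(y_2-y_1)]$ and $\pi_2 = 1/[y_2(y_2-y_1)]$. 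The identification $h_1 = \gamma - \int_0^\infty (s/s')\,g$ follows by specialising the general definition from Section~\ref{sec:expansion} via the relation $w_2 = -s'$.

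For case \eqref{drugiprzypadek}, the constant term $-\delta\mu/p(u)$ vanishes at infinity, so the small characteristic root satisfies $y_2(u) \sim \delta/p(u) \to 0$ and Theorem~\ref{th:main2} no longer applies verbatim (the factor $\pi_2/(y_2+\nu)$ becomes singular). Instead, I would analyse the Green's operator directly. For $l \geq 2$, $y_2(u) = O(u^{-l})$ decays fast enough that $\int^\infty \eta_2$ converges, hence $r(u)$ tends to a positive constant, and Watson's lemma applied to $\int_u^\infty (s/w)\,g$ produces a contribution of order $g(u)$, giving $K_1 g(u)$. For $l=1$, by contrast, $r(u) \sim u^{\delta/\epsilon}$ and $s(u) \sim u^{\lambda/\epsilon-1} e^{-\mu u}$ (from \eqref{eq:s}--\eqref{eq:r}) with Wronskian $w(u) \sim u^{(\lambda+\delta)/\epsilon-1} e^{-\mu u}$. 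The integrand of the $A$-integral is then $r(v)g(v)/w(v) \sim v^{1-\lambda/\epsilon} e^{(\mu-\nu)v}$; with $\nu=\mu$, the natural exponential rate of $g$ inherited from the exponential claim density and sitting exactly on the boundary of validity of Theorem~\ref{th:main2}, the exponential factor disappears, and the primitive grows polynomially. Multiplied by $s(u)$, this yields a term of order $u\,g(u)$, producing the $K_2\,u\,g(u)$ correction.

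The main obstacle is the exceptional linear case, since Theorem~\ref{th:main2} cannot be invoked and the full Liouville-Green expansion must be unwound: one has to track precisely the polynomial factors in $s, r, w$, verify by a Watson-type estimate that the $B$-integral remains $O(g(u))$ and is therefore subdominant, and ensure that all residual $O(s(u))$ contributions combine into the single coefficient $h_1$.
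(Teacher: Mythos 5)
Your route through Theorem~\ref{th:main2} for case \eqref{pierwszyprzypadek} is a legitimate and arguably cleaner alternative to what the paper does: the paper never invokes Theorem~\ref{th:main2} in this proof but instead rewrites $Gg(u)$ as the sum $q(u) = -s(u)\int_0^u \tfrac{r}{w}g - r(u)\int_u^\infty\tfrac{s}{w}g$, expresses each piece as a quotient $f(u)/h(u)$ with $h = 1/s$ or $1/r$, and applies an ``asymptotic l'H\^opital'' to obtain \eqref{eq:asyh}, from which $q(u)\sim K_1 g(u)$ follows once $s'/s\to k_1$, $r'/r\to k_2$. Your treatment of $l\ge 2$ (bounded $r$, Watson-type estimate) is in the same spirit as the paper's.

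The genuine gap is in your handling of $l=1$. You assert $g(u)\sim e^{-\nu u}$ with $\nu=\mu$, call it ``the natural exponential rate of $g$ inherited from the exponential claim density,'' and argue that this kills the exponential in the $A$-integrand $r(v)g(v)/w(v)\sim v^{1-\lambda/\epsilon}e^{(\mu-\nu)v}$, producing polynomial growth of the primitive and hence the $u\,g(u)$ correction. But this is not consistent with the theorem's own hypothesis: Theorem~\ref{th:main2} (and the whole Green's-operator construction with a convergent $B$-integral) requires $g$ to decay \emph{strictly} faster than $s$, i.e.\ $\nu>\mu$, and indeed the paper's own example has $g(u)\sim \lambda\nu e^{-(\nu+\mu)u}/p(u)$ with rate $\nu+\mu>\mu$. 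Under that hypothesis $e^{(\mu-\nu_{\rm thm})v}$ does \emph{not} vanish; the $A$-integral converges to a constant and contributes only to the coefficient $h_1$ of $s(u)$, not a $u\,g(u)$ term. In the paper's proof the extra factor of $u$ comes from the opposite piece, the \emph{$B$-integral}: in \eqref{eq:asyh} the term $\tfrac{1}{(r'/r)(r'/r - s'/s)}g(u)$ behaves like $u\,g(u)/\mu$ precisely because $r'(u)/r(u)\sim 1/u$ when $l=1$, whereas for $l>1$ one has $r\sim 1$ and this blow-up disappears. So your argument both invokes a decay rate that contradicts the stated hypothesis and attributes the $u$-factor to the wrong integral. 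A secondary issue: you justify $h_1=\gamma-\int_0^\infty(s/s')g$ by ``$w_2=-s'$,'' but that identity only holds in the $\delta=0$ case ($r\equiv 1$); for $\delta>0$, which is the case under discussion, $w_2 = sr'-s'r\ne -s'$, so this identification of $h_1$ also needs a separate argument.
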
\\

\begin{remark}
 Moreover, for the particular examples considered here, the structure of $s$ (and $r$) is indeed that of the form \eqref{C1}--\eqref{C3} that we had to impose as a condition in the more general framework.
\end{remark}\\

\begin{proof}
First note that for $\delta=0,$ $Gg(u)=0,$ and thus $\Phi(u)$ has the same behavior as the probability of ruin $\psi(u)=\gamma s(u)$.
Evaluating the expression (\ref{ti}) at $\delta=0$,
%
\begin{eqnarray*}
 s(u) &\sim&    e^{-\mu u +\lambda \int_0^u \frac{dv}{p(v)}}\,
 \left(\mu p(u) + p'(u)-\lambda \right)^{-1}
\end{eqnarray*}
leads to the classical result regarding the probability of ruin \eqref{dwa}.
For $\delta \neq 0$, one needs the asymptotic behavior of $Gg(u),$ which based on \eqref{eq:greenop} can be reduced to analyzing the asymptotic behavior of
$$q(u)= -s(u) \int_0^u
  \tfrac{r(v)}{w(v)}\,g(v) dv - r(u) \int_u^{\infty}
  \tfrac{s(v)}{w(v)}\,g(v) dv, $$
since the term $s(u)\int_0^{\infty} \tfrac{ s(v)}{s'(v) }g(v)\, dv$ behaves as $s(u)$ at infinity. After rewriting
\begin{equation*}
q(u)= -\frac{\int_0^u
  \tfrac{r(v)}{w(v)}\,g(v) dv}{\frac{1}{s(u)}}-\frac{\int_u^{\infty}
  \tfrac{s(v)}{w(v)}\,g(v) dv}{\frac{1}{r(u)}}
\end{equation*}
and expanding the Wronskian, one can apply l'H\^{o}pital rule and see that as $u\rightarrow \infty$ (after some algebra)
\begin{equation} \label{eq:asyh}
q(u)\sim \frac{\frac{1}{\tfrac{r'(u)}{r(u)}-\tfrac{s'(u)}{s(u)}}g(u)}{\tfrac{s'(u)}{s(u)}}-\frac{\frac{1}{\tfrac{s'(u)}{s(u)}-\tfrac{r'(u)}{r(u)}}g(u)}{\tfrac{r'(u)}{r(u)}}.
\end{equation}
%
Using Fedoryuk's asymptotic expressions (\ref{ti}) one more time, one can perform the analysis along the two cases introduced here.
It easy to check that in the first case, P1, 
we have
\begin{equation}\label{firstcaseas}
s(u)\sim \exp\left\{- k_1 u\right\}, \quad r(u)\sim \exp\left\{- k_2 u\right\},
\end{equation}
where $k_1=- \frac{\left(\mu -\frac{\lambda +\delta}{c}\right) +\sqrt{\left(\mu -\frac{\lambda +\delta}{c}\right)^2+4\tfrac{\delta \mu}{c}}}{2}$ and $k_2 =- \frac{\left(\mu -\frac{\lambda +\delta}{c}\right) -\sqrt{\left(\mu -\frac{\lambda +\delta}{c}\right)^2+4\tfrac{\delta \mu}{c}}}{2}.$
Thus, as  $\lim_{u\rightarrow \infty}\frac{s'(u)}{s(u)} =k_1$ and $\lim_{u\rightarrow \infty}\frac{r'(u)}{r(u)} =k_2$,  then
\begin{equation} \label{firstcaseh}
 h(u)\sim K_1\,g(u), \quad u\rightarrow \infty,
\end{equation}
where $K_1=\frac{k_1+k_2}{k_1k_2 (k_2-k_1)}= \frac{\mu -\frac{\lambda +\delta}{c}}{\tfrac{\delta \mu}{c} \sqrt{\left(\mu -\frac{\lambda +\delta}{c}\right)^2+4\tfrac{\delta \mu}{c}}}.$
The second case, P2, 
is more complex, producing more intriguing asymptotics. One can show that in this case $$s(u)\sim h_1 \, u^{\beta}\, e^{-\mu u},$$ with $\beta\in \mathbb{R}$. Note that for $l=1$, $\epsilon_1=\epsilon$ one recovers the asymptotics \eqref{eq:s}. One can also check that
$$\lim_{u\rightarrow \infty}\frac{s'(u)}{s(u)} =-\mu,$$ whereas
\begin{equation*}
 \frac{r'(u)}{r(u)} \sim \frac{1}{u} \quad \mbox{for}\;\,l=1,\quad \mbox{and}\quad  r(u) \sim 1  \quad \mbox{for}\;\, l>1,
\end{equation*}
%
producing respectively
\begin{equation} \label{secondcaseh}
q(u) \sim u g(u) \quad \mbox{and}\quad  q(u)\sim g(u).
\end{equation}
\end{proof}
\begin{example}\rm
 Consider a compound Poisson risk model with premium functions described by assumptions (P1) or (P2). Let the penalty  be a function of the surplus only, $w(x,y)=e^{-\nu x}$. Since we are in the exponential claims scenario,
$$g(u) = \lambda \mu (D+\mu) \int_u^{\infty} w(u-y) e^{-\mu y}\, dy=\lambda \nu e^{-(\nu +\mu)u}.$$
Thus, for a linear premium function, $$\Phi(u) =\gamma s(u)+ Gg(u) \Phi(u)\sim h_1 u^{\beta} e^{-\mu u}+\lambda \nu u e^{-(\nu +\mu)u},$$ with $\beta=\lambda/\epsilon -1$,  whereas for all the other premium funtions in the class considered here,
$$\Phi(u)\sim h_1 u^{\beta} e^{-\mu u}+\lambda \nu e^{-(\nu +\mu)u}, \quad u\rightarrow \infty,$$
with $\beta\in \mathbb{R}$.
\end{example}

\begin{example}\label{exx}\rm
 When $p(u)= c \exp{\epsilon/u} $, one has a differential  equation with \textit{almost constant coefficients},
\begin{eqnarray} \nonumber
  &&\left(D^2+(\mu -\frac{\epsilon}{u^2}-\frac{\lambda+\delta}{c}
\exp{-\epsilon/u})D-\frac{\delta \mu}{c} \exp{-\epsilon/u}\right)  \Phi(u)\\ \label{eq:odeexp}
&&=-\frac{\lambda}{c} \exp{-\epsilon/u}  (D +\mu)\omega(u).
\end{eqnarray}
This is an  equation of form \eqref{eq:acc}, with coefficients satisfying
\begin{equation}
 c_k(u)= \alpha_k +a_k(u), \quad k=1,2
\end{equation}
with $\alpha_k$ constant and $ \int_1^\infty |a_k(u)| du < \infty$. Here $a_1(u) = -\frac{\epsilon}{u^2} +\frac{\lambda+\delta}{c}a(u)$ and $a_0(u)= \frac{\delta \mu}{c} a(u)$, with  $a(u)= \sum_{k=1}^\infty \frac{(-1)^n \epsilon^n}{u^n n!}$ thus $\int_1^\infty |a(u)|du <\infty,$ and similarly for $a_0$ and $a_1$. From \cite[Th. 8.1, p. 92]{CoddLev} (see also Problem 32, p.105 there) we can hence conclude that the homogeneous equation has a fundamental system with asymptotics
$$s(u)=e^{\sigma u}\left(1+o(1)\right) \quad \mbox{and} \quad r(u)=e^{\rho u}\left(1+o(1)\right),  $$
where $\sigma$ and $\rho$ are solutions of the equation
\begin{equation*}
 x^2 +\left(\mu -\frac{\lambda+\delta}{c}\right)x -\frac{\delta \mu}{c}  =0,
\end{equation*}
with $\Re{\sigma}<0$  and $\Re{\rho}>0$.
Note that these solutions coincide with the one of the constant premium case. Consequently, one has the same asymptotic behavior as when the premium rate is constant.
\end{example}

\section{Conclusion}
\label{sec:concl}
We have provided a symbolic method and a conceptual framework for studying boundary value problems with variable coefficients as they appear in modelling the surplus level in a portfolio of insurance contracts in classical risk theory. The approach presented  allows a detailed analysis of the asymptotic behavior of the solutions of these equations under a set of conditions. For the specific case of the compound Poisson risk model, these conditions were made more explicit in terms of conditions on the form of $p(u)$. Moreover, several new closed-form solutions were established within this framework.

\Appendix
\section{Proof of Theorem~\ref{thm:gen-greensop}}
\label{sec:BPHL}

The proof of
Theorem~\ref{thm:gen-greensop} hinges on the following technical
lemma on Wronskian determinants.\\

\begin{lemma}
  \label{wronskian-formula}
  We have $\big( \tfrac{\cof{i}{k+1}}{\wro{k}} \big)' = - \cof{i}{k}
  \, \tfrac{\wro{k+1}}{\wro{k}^2}\,$ for $1 \le i \le k < m+n$.
\end{lemma}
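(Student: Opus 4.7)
My plan is to convert the identity into a statement purely about Wronskians (without any unit-vector column) and then apply the Desnanot--Jacobi (Sylvester) determinant identity. Expanding the determinant defining $\cof{i}{k}$ along its $i$-th column immediately gives $\cof{i}{k} = (-1)^{k+i}\,W[t_1,\dots,\widehat{t_i},\dots,t_k]$, and similarly $\cof{i}{k+1} = (-1)^{k+1+i}\,W[t_1,\dots,\widehat{t_i},\dots,t_{k+1}]$. Writing $W_m^{(i)}$ for the Wronskian obtained by omitting $t_i$ among the first $m$ functions, the lemma---after applying the quotient rule on the left and cancelling the common sign $(-1)^{k+i}$---is equivalent to
\[
  \bigl(W_{k+1}^{(i)}\bigr)' \, \wro{k} \; - \; W_{k+1}^{(i)} \, \wro{k}' \; = \; W_k^{(i)} \, \wro{k+1}.
\]

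Next I would recall the classical fact that differentiating any Wronskian $W[f_1,\dots,f_p]$ produces the determinant obtained by replacing its last row (derivatives of order $p-1$) by derivatives of order $p$: all other contributions from the Leibniz rule for determinants vanish because they create two identical rows. Applied to $\wro{k}$ and to $W_{k+1}^{(i)}$, this lets me realise both $\wro{k}'$ and $\bigl(W_{k+1}^{(i)}\bigr)'$ as genuine minors of the $(k+1)\times(k+1)$ Wronskian matrix $W_{k+1}$ of $t_1,\dots,t_{k+1}$, each obtained by deleting a single row and a single column.

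The core step is then to apply the Desnanot--Jacobi identity to $W_{k+1}$, choosing rows $k$ and $k+1$ together with columns $i$ and $k+1$ as the four ``corners''. A direct inspection identifies the six minors appearing in that identity as, respectively, $\wro{k+1}$ and the doubly-cornered minor $W_k^{(i)}$ on the left, against $\bigl(W_{k+1}^{(i)}\bigr)' \cdot \wro{k}$ versus $\wro{k}' \cdot W_{k+1}^{(i)}$ on the right. Substituting these identifications yields exactly the displayed Wronskian identity above. The hypothesis $1\le i \le k < m+n$ ensures that all row and column indices involved are admissible and pairwise distinct, which is what Desnanot--Jacobi requires.

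The main obstacle is purely bookkeeping rather than anything conceptually hard: one has to match each of the six submatrices in Desnanot--Jacobi with the correct (possibly differentiated) Wronskian, and verify that the signs produced by the column expansion of $\cof{i}{k}$, by the row re-ordering induced when viewing $\wro{k}'$ and $\bigl(W_{k+1}^{(i)}\bigr)'$ as minors of $W_{k+1}$, and by Desnanot--Jacobi itself, all combine to give the single overall minus sign appearing on the right-hand side of the lemma.
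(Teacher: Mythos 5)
Your proposal is correct and follows essentially the same route as the paper's own proof: both convert the quotient-rule statement into an identity among minors of the $(k+1)\times(k+1)$ Wronskian matrix of $t_1,\dots,t_{k+1}$ via the rule that differentiating a Wronskian replaces its last row, and then both invoke the Desnanot--Jacobi / Sylvester determinant identity with rows $\{k,k+1\}$ and columns $\{i,k+1\}$ as the deleted indices. Your preliminary step of cancelling the sign $(-1)^{k+i}$ to pass from the $\cof{i}{k}$-formulation to a pure Wronskian identity is only a cosmetic rearrangement of what the paper does when it identifies $\cof{i}{k+1}=(-1)^{i+k+1}W^i_{k+1}$ and so on.
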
 \\

\begin{proof}
  We have to show $\cof{i}{k+1} \, \wro{k}' - \cof{i}{k+1}' \, \wro{k}
  = \cof{i}{k} \, \wro{k+1}$. We note that all expressions in this
  formula are certain minors of the Wronskian matrix $W$ for $t_1,
  \dots, t_{m+n}$. So let us write $W^{i_1, \ldots, i_l}_{j_1, \ldots,
    j_l}$ for the minor of $W$ resulting from deleting the columns
  indexed $i_1, \dots, i_l$ and the rows indexed $j_1, \dots,
  j_l$. Then we have $\wro{k} = W^{k+1}_{k+1}$, $\cof{i}{k+1} =
  (-1)^{i+k+1} W^i_{k+1}$, $\cof{i}{k} = (-1)^{i+k}
  W^{i,k+1}_{k,k+1}$, with the derivatives $\cof{i}{k+1}' =
  (-1)^{i+k+1} W^i_k$ and $\wro{k}' = W^{k+1}_k$. For the latter, we
  use the fact that a Wronskian determinant can be differentiated if
  one replaces the last row by its derivative; see for example
~\cite[p.~118]{Ince1926}. Multiplying by $(-1)^{i+k}$, it remains to show
  $W^i_k W^{k+1}_{k+1} - W^i_{k+1} W^{k+1}_k = W^{i,k+1}_{k,k+1} \cdot
  \det{W}$. But this is a classical determinant formula of Sylvester;
  see for example~\cite[p.~1571]{Salem2008} or Eqn.~(4.49'')
  in~\cite{Groebner1966}.
\end{proof}

The preceding lemma is the key tool for removing the nested integrals
in~\eqref{eq:gen-greensop}. For seeing this, note that it can be read
backwards as giving the integral of $\cof{i}{k} \,
\wro{k+1}/\wro{k}^2$. In conjunction with certain operator identities
taken from~\cite{Rosenkranz2005}, this allows us to collapse
expressions of the form $A \cdots A$ or $B \cdots B$ or, at the
interface of the two blocks, $A \cdots B$. \\

\begin{proof}[Proof of Theorem~\ref{thm:gen-greensop}]
  Note that the case $n=0$ reduces to Proposition~\ref{th:di}, so we
  may assume $n>0$ in the sequel. We know from
  Proposition~\ref{greensop-fact} that
  \begin{equation*}
    G = (-1)^n A_{s_1} \cdots A_{s_m} B_{r_1} \cdots B_{r_n},
  \end{equation*}
  using the notation employed there. Based on this factorization, we
  prove~\eqref{eq:gen-greensop} by induction on $n$.
  In the base case $n=1$, applying Proposition~\ref{th:di} again
  yields
  \begin{align*}
    G = & \;G_s (-B_{r_1}) = \Big( \sum_{i=1}^m s_i \, A \,
    \tfrac{\cof{m}{i}}{\wro{m}} \Big) \tfrac{\wro{m+1}}{\wro{m}} \,
    (-B) \, \tfrac{\wro{m}}{\wro{m+1}}\\
    &= \sum_{i=1}^m s_i \, A \, \Big( - \cof{m}{i} \,
    \tfrac{\wro{m+1}}{\wro{m}^2} \Big) \, B \,
    \tfrac{\wro{m}}{\wro{m+1}},
  \end{align*}
  and Lemma~\ref{wronskian-formula} gives $(\cof{m+1}{i}/\wro{m})'$ for
  the expression in parentheses. Now we employ the identity $A f B = A
  \, (\cum_0^u f) + (\cum_0^u f) \, B$, for arbitrary functions~$f$,
  from~\cite{Rosenkranz2005}. Substituting the expression in
  parentheses for~$f$, this gives
  \begin{equation*}
    A \, \tfrac{\cof{m+1}{i}}{\wro{m}} + \tfrac{\cof{m+1}{i}}{\wro{m}} \,
    B - \coeff{1}{i} \, F,
  \end{equation*}
  so we end up with
  \begin{equation*}
    G = \sum_{i=1}^m \Big( s_i \, A \, \tfrac{\cof{m+1}{i}}{\wro{m+1}} + s_i \,
    \tfrac{\cof{m+1}{i}}{\wro{m}} \, B \tfrac{\wro{m}}{\wro{m+1}} -
    \coeff{1}{i} \, s_i \, F \tfrac{\wro{m}}{\wro{m+1}} \Big).
  \end{equation*}
  In the middle, we factor out $\sum_i s_i \, \cof{m+1}{i}$, which equals
  $-r_1 \, \cof{m+1}{m+1}$ as one sees by replacing the last row
  in~$\wro{m+1}$ by the first and then expanding along that last
  row. But~$\cof{m+1}{m+1} = \wro{m}$, so the middle sum simplifies to
  $r_1 \, (-B) \, \cof{m+1}{m+1}/\wro{m+1}$ and may thus be
  incorporated into the first sum. In the third sum of the above
  expression, we factor out $\sum_i \coeff{1}{i} \, s_i = \ssum{1}$. Thus we
  obtain finally
  \begin{equation*}
    G = \sum_{i=1}^{m+1} t_i \, C_i \, \tfrac{\cof{m+1}{i}}{\wro{m+1}} -
    \srec{1} \, F \, \tfrac{\cof{m+1}{m+1}}{\wro{m+1}},
  \end{equation*}
  which is the desired formula~\eqref{eq:gen-greensop} for $n=1$.
 Now assume Equation~\eqref{eq:gen-greensop} for $n$; we prove it for
 $n+1$. Using the induction hypothesis we obtain
  \begin{align*}
    G &= (-1)^n A_{s_1} \cdots A_{s_m} B_{r_1} \cdots B_{r_n}
    (-B_{r_{n+1}})\\
    &= \Big( \sum_{i=1}^{m+n} t_i \, C_i \,
    \tfrac{\cof{i}{m+n}}{\wro{m+n}} - \sum_{j=1}^n \srec{j} \, F \,
    \tfrac{\cof{m+j}{m+n}}{\wro{m+n}} \Big) \,
    \tfrac{\wro{m+n+1}}{\wro{m+n}} \, (-B) \,
    \tfrac{\wro{m+n}}{\wro{m+n+1}}\\
    &= \sum_{i=1}^{m+n} t_i \, C_i \, \big(- \cof{i}{m+n} \,
    \tfrac{\wro{m+n+1}}{\wro{m+n}^2} \big) \, B \,
    \tfrac{\wro{m+n}}{\wro{m+n+1}}\\
    & \qquad\qquad - \sum_{j=1}^n \srec{j} \, F \,
    \big(- \cof{m+j}{m+n} \tfrac{\wro{m+n+1}}{\wro{m+n}^2} \big) \,
    B \, \tfrac{\wro{m+n}}{\wro{m+n+1}}.
  \end{align*}
  As before, we see that Lemma~\ref{wronskian-formula}, with $n+1$ in
  place of $n$, can be applied to the expressions in the two
  parentheses, yielding $(\cof{m+n+1}{i}/\wro{m+n})'$ for the former and
  $(\cof{m+n+1}{m+j}/\wro{m+n})'$ for the latter. In addition to the
  identity for $A f B$ used for the base case, we need now also the
  related identities $B f B = (\cum_u^\infty f) \, B - B \,
  (\cum_u^\infty f)$ and $F f B = F \, (\cum_0^u f)$ also to be found
  in~\cite{Rosenkranz2005}. When we substitute for $f$, these
  identities take on the form
  \begin{align*}
    A f B &= A \, \tfrac{\cof{m+n+1}{i}}{\wro{m+n}} +
    \tfrac{\cof{m+n+1}{i}}{\wro{m+n}}  \, B - \coeff{n+1}{i} \, F,\\
    B f B &= B \, \tfrac{\cof{m+n+1}{i}}{\wro{m+n}} -
    \tfrac{\cof{m+n+1}{i}}{\wro{m+n}} \, B
  \end{align*}
  for the first expression and
  \begin{equation*}
    FfB = F \, \cof{m+n+1}{m+j}/\wro{m+n} - \coeff{n+1}{m+j} \, F
  \end{equation*}
  for the second. We split the first sum above into the two sums
  \begin{align*}
    &\sum_{i=1}^m s_i \, \big( A \, \tfrac{\cof{m+n+1}{i}}{\wro{m+n}} +
    \tfrac{\cof{m+n+1}{i}}{\wro{m+n}} \, B - \coeff{n+1}{i} \, F \big)
    \, \tfrac{\wro{m+n}}{\wro{m+n+1}},\\
    &\sum_{j=1}^n r_j \big( \tfrac{\cof{m+n+1}{m+j}}{\wro{m+n}} \, B -
     B \, \tfrac{\cof{m+n+1}{m+j}}{\wro{m+n}} \big) \,
     \tfrac{\wro{m+n}}{\wro{m+n+1}}.
  \end{align*}
  In the lower-range sum
  \begin{align*}
    & \sum_{i=1}^m s_i \, A \, \tfrac{\cof{m+n+1}{i}}{\wro{m+n+1}} +
    \Big( \sum_{i=1}^m s_i \, \cof{m+n+1}{i} \Big)/\wro{m+n} \, B \,
    \tfrac{\wro{m+n}}{\wro{m+n+1}}\\
    & \qquad\qquad - \Big( \sum_{i=1}^m \coeff{n+1}{i} \, s_i \Big) \,
    F \, \tfrac{\wro{m+n}}{\wro{m+n+1}}
  \end{align*}
  we can apply the same determinant expansion as before to obtain
  \begin{align*}
    & \sum_{i=1}^m s_i \, A \, \tfrac{\cof{m+n+1}{i}}{\wro{m+n+1}}
    + \Big(-r_{n+1} - \sum_{j=1}^n r_j \,
    \tfrac{\cof{m+n+1}{m+j}}{\wro{m+n}} \Big) \, B \,
    \tfrac{\wro{m+n}}{\wro{m+n+1}}\\
    & \qquad\qquad - \ssum{n+1} \, F \,
    \tfrac{\wro{m+n}}{\wro{m+n+1}}\\
    & = \sum_{i=1}^m t_i \, C_i \, \tfrac{\cof{m+n+1}{i}}{\wro{m+n+1}} +
    t_{m+n+1} \, C_{m+n+1} \, \tfrac{\cof{m+n+1}{m+n+1}}{\wro{m+n+1}}\\
    & \qquad\qquad -
    \sum_{j=1}^n r_j \, \tfrac{\cof{m+n+1}{m+j}}{\wro{m+n}} \, B \,
    \tfrac{\wro{m+n}}{\wro{m+n+1}} -
    \ssum{n+1} \, F \, \tfrac{\wro{m+n}}{\wro{m+n+1}};
  \end{align*}
  in the upper-range sum we get
  \begin{equation*}
    \sum_{j=1}^n r_j \tfrac{\cof{m+n+1}{m+j}}{\wro{m+n}} \, B \,
    \tfrac{\wro{m+n}}{\wro{m+n+1}} + \sum_{j=1}^n
     t_{m+j} \, C_{m+j} \, \tfrac{\cof{m+n+1}{m+j}}{\wro{m+n+1}}.
  \end{equation*}
  Combining the lower-range with the upper-range sum, the first sum
  within the latter cancels with the second sum within the former,
  yielding
  \begin{equation*}
    \sum_{i=1}^{m+n+1} t_i \, C_i \, \tfrac{\cof{m+n+1}{i}}{\wro{m+n+1}} -
    \ssum{n+1} \, F \, \tfrac{\wro{m+n}}{\wro{m+n+1}}.
  \end{equation*}
  Now let us tackle the second sum in the above expression for $G$,
  namely
  \begin{align*}
    &- \sum_{j=1}^n \srec{j} \, F \, \big(- \cof{m+j}{m+n}
    \tfrac{\wro{m+n+1}}{\wro{m+n}^2} \big) \,
    B \, \tfrac{\wro{m+n}}{\wro{m+n+1}}\\
    &= \sum_{j=1}^n \srec{j} \, \big( \coeff{n+1}{m+j} \, F - F \,
    \tfrac{\cof{m+n+1}{m+j}}{\wro{m+n}} \big) \,
    \tfrac{\wro{m+n}}{\wro{m+n+1}}\\
    &= \Big( \sum_{j=1}^n \coeff{n+1}{m+j} \, \srec{j} \Big) \, F \,
    \tfrac{\wro{m+n}}{\wro{m+n+1}} - \sum_{j=1}^n \srec{j} \, F
    \, \tfrac{\cof{m+n+1}{m+j}}{\wro{m+n+1}},
  \end{align*}
  where the expression in parentheses is $\ssum{n+1} - \srec{n+1}$
  by the definition of the $\srec{j}$. Altogether we obtain now
  \begin{align*}
    G &= \sum_{i=1}^{m+n+1} t_i \, C_i \,
    \tfrac{\cof{m+n+1}{i}}{\wro{m+n+1}}  - \sum_{j=1}^n \srec{j} \, F
    \, \tfrac{\cof{m+n+1}{m+j}}{\wro{m+n+1}}\\
    & \qquad\quad - \Big( \ssum{n+1} \, F \,
    \tfrac{\wro{m+n}}{\wro{m+n+1}} -
    (\ssum{n+1} - \srec{n+1}) \, F \,
    \tfrac{\wro{m+n}}{\wro{m+n+1}} \Big)\\
    &= \sum_{i=1}^{m+n+1} t_i \, C_i \,
    \tfrac{\cof{m+n+1}{i}}{\wro{m+n+1}}  - \sum_{j=1}^{n+1} \srec{j} \, F
    \, \tfrac{\cof{m+n+1}{m+j}}{\wro{m+n+1}},
  \end{align*}
  which is indeed~\eqref{eq:gen-greensop} with $n+1$ in place of $n$.
\end{proof} \\

In concluding this Appendix, let us also mention that
Theorem~\ref{thm:gen-greensop} is also valid if $B$ is taken to be the
operator $\cum_x^b$ with finite $b \in \R$ rather than $b =
\infty$. The reason is that the operator identities
from~\cite{Rosenkranz2005} are also valid in this case (and were
actually set up for this case in the first place).

\subsection*{Acknowledgments}
The authors would like to thank the Mathematical Institute of
Wroclaw  and the University of Lausanne for accommodating several research visits.

\bibliographystyle{siam}

\end{document}